\theoremstyle{plain}
\newtheorem{theorem}{Theorem}[section]
\newtheorem{lemma}[theorem]{Lemma}
\newtheorem{definition}[theorem]{Definition}
\newtheorem{fact}{Fact}
\newcommand{\dist}{\Delta}
\newcommand{\RR}{\mathbb{R}}
\newcommand{\acw}{ACW}
\newcommand{\cw}{CW}
\renewcommand{\tilde}{\widetilde}
\renewcommand{\R}{\mathbb{R}}
\newcommand\ie{{i.e.,}\xspace}
\newcommand{\BUF}{$\mathsf{ULT}$\xspace}
\newcommand{\LCA}{\mathsf{LCA}}
\newcommand{\SETH}{$\mathsf{SETH}$\xspace}
\newcommand{\OVH}{$\mathsf{OVH}$\xspace}
\renewcommand{\CH}{$\mathsf{CH}$\xspace}
\renewcommand{\CP}{$\mathsf{CP}$\xspace}
\newcommand{\B}{$\mathcal{B}_d$\xspace}
\newcommand{\Dun}{$\mathcal{D}_{\texttt{uni}}$\xspace}
\newcommand{\Dpl}{$\mathcal{D}_{\texttt{plant}}$\xspace}
\newif\ifdraft\drafttrue
\newcommand\vc[1]{\todo[inline,size=\scriptsize,backgroundcolor=SpringGreen]{#1 - \textbf{Vincent}}}
\newcommand\gl[1]{\todo[inline,size=\scriptsize,backgroundcolor=Pink]{#1 - \textbf{Guillaume}}}
\newcommand\cs[1]{\todo[inline,size=\scriptsize,backgroundcolor=Yellow]{#1 - \textbf{Karthik}}}
\newcommand\vc[1]{}
\newcommand\gl[1]{}
\newcommand\cs[1]{}
\newcommand{\opt}{_{\mathsf{OPT}}}
\title{On Efficient Low Distortion Ultrametric Embedding}
\author{Vincent Cohen-Addad\\ Google\\ \texttt{vcohenad@gmail.com}\and Karthik C.\ S.\\
Tel Aviv University\\
\texttt{karthik0112358@gmail.com}\vspace{0.5cm}\and Guillaume Lagarde\\ LaBRI Bordeaux\\ \texttt{guillaume.lagarde@gmail.com} }
\date{}
\begin{document}

\maketitle
\begin{abstract}
A classic problem in unsupervised learning and data analysis is to find
   \emph{simpler} and \emph{easy-to-visualize} representations of the data
that preserve its essential properties. 
   A widely-used method to preserve the underlying hierarchical structure
of the data while reducing
   its complexity is to find an embedding of the data into a tree or an
ultrametric. The most popular
   algorithms for this task are the classic \emph{linkage} algorithms
(single, average, or complete).
   However, these methods on a data set of $n$ points in $\Omega(\log n)$
   dimensions exhibit a quite prohibitive running time of
$\Theta(n^2)$.

   In this paper, we provide a new algorithm which takes as input a set of
   points $P$ in $\R^d$, and for every $c\ge 1$,  runs in time $n^{1+\frac{\rho}{c^2}}$ (for some universal constant $\rho>1$)
to output an ultrametric $\Delta$ such that for any two points $u,v$ in $P$,
we have $\Delta(u,v)$ is within a multiplicative factor of $5c$ to
   the distance between $u$ and $v$ in the ``best'' ultrametric
representation of $P$. Here, the best ultrametric is the ultrametric $\tilde\Delta$ that minimizes
the maximum distance distortion with respect to the $\ell_2$ distance,
namely that minimizes $\underset{u,v \in P}{\max}\ \nicefrac{\tilde\Delta(u,v)}{\|u-v\|_2}$.

   We complement the above result by showing that under popular complexity theoretic assumptions, for every constant
   $\varepsilon>0$, no algorithm with running
time $n^{2-\varepsilon}$ can distinguish between inputs in $\ell_\infty$-metric that admit isometric embedding and those that   incur a distortion of $\nicefrac{3}{2}$.
   
Finally, we present empirical evaluation on classic machine learning
datasets and show that the output
   of our algorithm is comparable to the output of the linkage
algorithms while achieving a much faster running time.
\end{abstract}
\clearpage

\section{Introduction}

The \emph{curse of dimensionality} has ruthlessly been haunting machine learning and data mining researchers.
On the one hand, high dimensional representation of data elements allows fine-grained description of each datum and 
can lead to more accurate models, prediction and understanding.
On the other hand, obtaining a significant signal in each dimension often requires a huge amount of data and
high-dimensional data requires algorithms that can efficiently handle it.
Hence, computing a \emph{simple} representation of a high-dimensional dataset while preserving its most important properties
has been a central problem
in a large number of communities since the 1950s.

Of course, computing a simple representation of an arbitrary high-dimensional set of data elements
necessarily incurs some information loss. Thus, the main question has been to
find dimensionality reduction techniques that would preserve -- or better, \emph{reveal} -- some structure of the data.
An example of such a successful approach has been the \emph{principal component analysis} which can be used to denoise
a dataset and obtain a low-dimensional representation where `similar' data elements are mapped to close-by locations.
This approach has thus become a widely-used, powerful tool to identify cluster structures in high-dimensional datasets.

Yet, in many cases more complex structures underlie the datasets and
it is crucial to identify this structure.  For example, given similarity
relations between species, computing a phylogenetic tree requires more
than identifying a `flat' clustering structure, it is critical to
identify the whole hierarchy of species.  Thus, computing a simple
representation of an input containing a hierarchical structure has
drawn a lot of attention over the years, in particular from
the computational biology community. The most popular approaches
are arguably the \emph{linkage} algorithms, average-linkage,
single-linkage, Ward's method, and complete-linkage, which produce an
embedding of the original metric into an ultrametric\footnote{An
  ultrametric $(X,\Delta)$ is a metric space where for each $x,y,z \in X$,
  $\Delta(x,y) \le \max(\Delta(x,z),\Delta(z,y))$.},
  see for example the seminal work
of \cite{Carlsson10}. Unfortunately, these approaches come with a
major drawback: all these methods, have quadratic running time\footnote{We would like to note here that the relevant work of \cite{AbboudCH19} only mimics the behavior of average-linkage or ward’s method and does not necessarily output an ultrametric. }
-- even in the best case -- when the input consists of points
in $\Theta(\log n)$ dimensions (where $n$ is the number of points)
making them impractical for most applications nowadays. Obtaining an
efficient algorithm for computing ``good'' hierarchical representation
has thus been a major problem (see Section~\ref{sec:relatedwork} for
more details).

In this paper we are interested in constructing
embeddings that (approximately) preserve the hierarchical structure
underlying the input. For example, given three points
$a,b,c$, we would like that if $a$ is more similar to $b$ than to $c$
(and so $a$ is originally closer to $b$ than to $c$ in the
high-dimensional representation), then the  distance of $a$ to $b$ in the
ultrametric is lower than its distance to $c$. More formally, given a
set of points $X$ in Euclidean space, a \emph{good ultrametric}
representation $\Delta$ is such that for every two points $a,b$ in $X$, we have 
$$\|a-b\|_2 \le \Delta(a,b) \le \alpha\cdot \|a-b\|_2,$$ for the smallest possible
$\alpha$ (see formal definition in Section~\ref{sec:prelims}).
Interestingly, and perhaps surprisingly,
this problem can be solved in $O(n^2d + n^2\log n)$ using an
algorithm by \cite{FKW95}. Unfortunately, this algorithm also suffers
from a quite prohibitive quadratic running time. We thus ask:

 \begin{center}\textit{Is there an
easy-to-implement, efficient algorithm for finding \\
good ultrametric
representation of high-dimensional inputs?}
\end{center}

\subsection{Our Results}
We focus on the problem mentioned above, which we refer to as the  \textsc{Best Ultrametric Fit} problem
(\BUF) and which is formally defined in Section~\ref{sec:prelims}.
We provide a simple algorithm, with running time $O(nd)+ n^{1+O(1/\gamma^2)}$ that returns a $5\gamma$-approximation for
the \BUF problem, or a near-linear time algorithm that returns an $O(\sqrt{\log n})$-approximation.

\vc{I have replace the previous theorem with this one, which is the
correct one I believe}
\begin{theorem}[Upper Bound]
  \label{thm:UB:euclid}
  For any $\gamma > 1$, there is an algorithm that produces a $5\gamma$-approximation
  in time $nd + n^{1+O(1/\gamma^2)}$
  for Euclidean instances of \BUF  of dimension $d$.\\
  Moreover, there is an algorithm that produces an
  $O(\sqrt{\log n})$-approximation
  in time $O(nd + n \log^2 n)$ for Euclidean instances of \BUF  of dimension $d$.
\end{theorem}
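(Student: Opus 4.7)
The approach is to reduce the \BUF problem to an approximate-minimum-spanning-tree (MST) computation in $\R^d$ and then to solve the latter via Locality-Sensitive Hashing (LSH).

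\emph{Structural reduction.} I would first invoke the classical characterization of the optimum ultrametric for \BUF: if $T$ is any MST of $P$ and $\Delta_T(u,v) := \max_{e \in \pi_T(u,v)} \|e\|_2$, where $\pi_T(u,v)$ denotes the unique $T$-path between $u$ and $v$, then $\Delta_T$ is an ultrametric with $\Delta_T \le \|\cdot\|_2$, the optimum \BUF distortion is $\alpha^\star := \max_{u \ne v}\|u-v\|_2/\Delta_T(u,v)$, and the optimum dominating ultrametric equals $\alpha^\star \Delta_T$. The key robust extension that powers the whole scheme is that if $T'$ is \emph{any} spanning tree whose bottleneck distances satisfy $\Delta_{T'}(u,v) \le \beta\, \Delta_T(u,v)$ for all $u,v$ (note that $\Delta_{T'} \ge \Delta_T$ is automatic since the MST minimizes bottleneck distances), then the rescaled ultrametric $\alpha' \Delta_{T'}$, with $\alpha' := \max_{u \ne v}\|u-v\|_2/\Delta_{T'}(u,v) \le \alpha^\star$, is a dominating ultrametric of distortion at most $\beta \alpha^\star$. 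Thus it suffices to produce a spanning tree whose bottleneck distances $O(\gamma)$-approximate those of the MST.

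\emph{Algorithm via LSH.} To build such a $T'$ subquadratically, I would run a Bor\r{u}vka-style merging procedure over $O(\log(nR))$ geometrically spaced scales $r_i = 2^i r_0$ (where $R$ is the aspect ratio of $P$), using at each scale an $\ell_2$ LSH structure with approximation parameter $c$; such a structure has preprocessing $O(nd) + n^{1+O(1/c^2)}$ and query time $n^{O(1/c^2)}$. At scale $r_i$, each current cluster probes the LSH for a point at distance at most $cr_i$ from one of its members that lies in a different cluster, and merges accordingly via the discovered edge. A standard argument shows that after scale $r_i$ every pair of clusters whose true MST-bottleneck distance is at most $r_i$ has been merged, and every edge added at scale $r_i$ has length at most $cr_i$. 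Hence, for any pair $u,v$ with $\Delta_T(u,v) \in (r_{i-1},r_i]$, the edges on the $T'$-path between $u$ and $v$ all come from scales at most $r_i$ and have length at most $cr_i \le 2c\,\Delta_T(u,v)$; therefore $\Delta_{T'}(u,v) \le 2c\,\Delta_T(u,v)$. Combined with the reduction this gives an $O(c)$-approximate dominating ultrametric, and tight bookkeeping of the scale doubling and the $c$-factor slack delivers the stated constant $5$. Taking $c = \gamma$ yields the first bound. For the second, take $c = \Theta(\sqrt{\log n})$: then $n^{O(1/c^2)} = O(1)$, the $O(\log n)$ scales and standard LSH amplification contribute an extra $\log^2 n$ factor, and we obtain $O(nd + n\log^2 n)$ total time.

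\emph{Main difficulty.} The delicate point is the distortion analysis: one must avoid accumulating an extra $\log n$ factor from the $O(\log n)$ scales. The crucial observation is that any $T'$-path suffers the LSH $c$-factor loss at its heaviest edge only, not once per scale, so the overall blow-up remains a single $O(c)$ factor. A secondary issue is amplifying the LSH success probability so that \emph{all} required queries across all scales succeed with high probability; standard independent-repetition and union-bound arguments handle this with a $\operatorname{polylog}(n)$ overhead that is already absorbed in the claimed runtimes.
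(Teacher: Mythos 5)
Your structural reduction is sound: the minimax (bottleneck) distance $\Delta_T$ on an MST is indeed the maximal ultrametric below $w$, the optimal distortion equals $\alpha^\star=\max_{u\neq v} w(u,v)/\Delta_T(u,v)$, and replacing $T$ by any tree whose bottleneck distances are within a factor $\beta$ of the MST's loses only $\beta$ after rescaling. This is essentially a repackaging of what the paper gets by feeding a $\gamma$-approximate Kruskal tree into the Farach--Kannan--Warnow machinery. The genuine gap is in the second half: your output is $\alpha'\Delta_{T'}$ with $\alpha'=\max_{u\neq v}\|u-v\|_2/\Delta_{T'}(u,v)$, and you never say how to compute (or even constant-factor approximate) $\alpha'$ in subquadratic time. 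Naively it is a maximum over $\binom{n}{2}$ pairs, i.e.\ $\Theta(n^2d)$ work; decomposing it by bottleneck edge, $\alpha'=\max_e \cw(e)/w(e)$, shows that it is exactly the cut-weight computation that FKW spend $O(n^2)$ on. This is not a detail one can wave away: fast \emph{approximate} cut weights are the second key ingredient of the paper's algorithm (a union--find over the MST edges in increasing order, keeping a representative $r_C$ and radius $m_C$ per cluster, which yields a $5$-estimate of each cut weight in $O(nd+n\log n)$ total), and the factor $5$ in the theorem comes precisely from that estimate, not from ``bookkeeping of the scale doubling'' in the tree construction. Without such a procedure your algorithm either does not dominate $w$ or does not run in the claimed time; with it, your global-rescaling variant would work, but you would still have to prove the analogue of the paper's inequality $E\le 5\,d(x',y')$ for the cluster representatives.

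Two secondary points. First, your constants do not add up: with a $(c,r)$-ANN slack of $c$ and scale doubling you get $\Delta_{T'}\le 2c\,\Delta_T$, and an approximate $\alpha'$ multiplies in a further constant, so ``delivers the stated constant $5$'' is unjustified as written; the paper's $5\gamma$ is cleanly the product of $\delta=5$ (cut-weight estimate) and $\gamma$ (approximate Kruskal tree). Second, your Bor\r{u}vka-over-scales construction differs from the paper's route, which builds an $O(\gamma)$-spanner (LSH-based, or Lipschitz partitions for $\gamma=\sqrt{\log n}$) and takes an exact MST of the spanner, observing that this is automatically a $\gamma$-KT; your scheme is plausible but needs care that same-cluster answers to LSH queries and per-scale query budgets do not blow up the running time, whereas the spanner-plus-Kruskal argument avoids these issues entirely. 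The paper also avoids any global scaling by assigning each cartesian-tree node its (approximate) cut weight, which is why it never needs the quantity $\alpha'$ at all.
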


From a theoretical point of view, note that we can indeed get rid of the $nd$ dependency in the above theorem and replace it with an optimal bound depending on the number of non-zero coordinates by  applying a sparse Johnson-Lindenstrauss transform in the beginning. Nonetheless, we stuck to the $nd$ dependency as it keeps the presentation of our algorithm simple and clear, and also since this is what we use in the experimental section.

Importantly, and perhaps surprisingly, we show that finding a faster than $n^{2-\varepsilon}$ algorithm for this problem is beyond current techniques.
\begin{theorem}[Lower Bound; Informal version of Theorem~\ref{thm:ellinf}]\label{thm:lb}
  Assuming \SETH, for every $\varepsilon>0$, no algorithm running in time $n^{2-\varepsilon}$
  can determine if an instance of \BUF of points in $\ell_\infty$-metric admits
  an isometric embedding or every embedding has distortion  at least  $3/2$.
\end{theorem}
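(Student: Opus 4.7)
The plan is to reduce the Orthogonal Vectors problem (OV) in sub-quadratic time to the promise version of \BUF in $\ell_\infty$. Recall that OV---given $A, B \subseteq \{0,1\}^d$ of size $n$ each with $d = \omega(\log n)$, decide whether there exist $a \in A, b \in B$ with $\langle a, b \rangle = 0$---requires $n^{2-o(1)}$ time under \SETH. The goal is to map an OV instance in $O(nd)$ time to $2n$ points in $\ell_\infty^d$ such that a NO-instance yields a metric that is already an ultrametric, while a YES-instance forces every ultrametric embedding to have distortion at least $3/2$.

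First I would run an $O(nd)$-time preprocessing step to eliminate two degenerate configurations that would otherwise spoil the gap. Compute $U_A = \bigcup_{a \in A} \mathrm{supp}(a)$ and $U_B = \bigcup_{b \in B} \mathrm{supp}(b)$; if some $a \in A$ has $\mathrm{supp}(a) \cap U_B = \emptyset$, then $a$ is orthogonal to every vector in $B$ and we report YES, and symmetrically for $B$. After this step we may assume that every $a \in A$ has at least one non-orthogonal partner in $B$ and vice versa, and in particular no vector involved is zero.

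The \BUF instance is then defined by $p_a := a$ for $a \in A$ and $p_b := -b/2$ for $b \in B$. A direct calculation gives $\|p_a - p_{a'}\|_\infty = 1$ for distinct $a, a' \in A$, $\|p_b - p_{b'}\|_\infty = 1/2$ for distinct $b, b' \in B$, and $\|p_a - p_b\|_\infty = \max_j(a_j + b_j/2)$, which equals $3/2$ exactly when some coordinate has $a_j = b_j = 1$ (i.e., $\langle a,b\rangle \ne 0$), and equals $1$ whenever $a \perp b$ (since $a \ne 0$ forces some $a_j = 1$ with $b_j = 0$, contributing exactly $1$ to the maximum). In a NO-instance every cross distance equals $3/2$, so the multiset of side lengths of any triple falls in $\{(1,1,1),(\tfrac12,\tfrac12,\tfrac12),(1,\tfrac32,\tfrac32),(\tfrac12,\tfrac32,\tfrac32)\}$; each satisfies the ultrametric three-point condition (the two largest sides coincide), so the input $\ell_\infty$-metric is already an ultrametric and hence admits an isometric ultrametric embedding.

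In a YES-instance, pick an orthogonal pair $(a_0, b_0)$ and an $a_1 \in A$ that is non-orthogonal to $b_0$ (guaranteed by the preprocessing). The triple $(p_{a_0}, p_{a_1}, p_{b_0})$ then has distances $(1, 1, 3/2)$, whose two largest sides are distinct; since any ultrametric $\Delta$ must satisfy $\Delta \ge \|\cdot\|_\infty$ together with the three-point condition on this triple, at least one unit edge must be stretched to a value $\ge 3/2$, yielding distortion $\ge 3/2$. Any $n^{2-\varepsilon}$-time algorithm for \BUF in $\ell_\infty$ would therefore decide OV in $O(nd)+n^{2-\varepsilon}$ time, contradicting \SETH. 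The hardest part of the argument---and the easiest place for the construction to fail---is the choice of asymmetric scaling: mapping $B$ by the factor $-1/2$ is what simultaneously collapses the ``orthogonal'' cross distance to exactly $1$ and keeps every NO-case triple ultrametric, producing the precise $3/2$ gap in the YES case; handling the degenerate vectors in the preprocessing step is what ensures the gap-creating triple actually exists whenever an orthogonal pair does.
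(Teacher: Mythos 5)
Your proposal is correct, but it takes a different route from the paper. The paper does not reduce from Orthogonal Vectors directly: it invokes, as a black box, the \SETH-hardness of gap Bichromatic Closest Pair in $\ell_\infty$ due to David et al.\ (Theorem~\ref{thm:DKL19}), whose hard instances already come with the clean structure ``all intra-$A$ and intra-$B$ distances equal $2$, all cross distances in $\{1,3\}$''; completeness is then handled by exhibiting an explicit two-level tree realizing the $2$-vs-$3$ metric isometrically, and soundness by the generic three-point distortion bound (Lemma~\ref{lem:distortion}) applied to a triple with distances $(\le 1,\le 2,\ge 3)$. You instead build your own gadget on top of OV: the asymmetric embedding $a\mapsto a$, $b\mapsto -b/2$ produces distances $1$ (intra-$A$), $1/2$ (intra-$B$), and $1$ or $3/2$ (cross, according to orthogonality), so that in the NO case the $\ell_\infty$ metric is \emph{itself} an ultrametric (every triple has its two largest sides equal), while in the YES case a $(1,1,3/2)$ triple forces distortion $3/2$ by the same three-point argument as Lemma~\ref{lem:distortion}. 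What your approach buys is self-containedness (no appeal to \cite{DKL19}) and an explicit treatment of the degenerate case — your support-based preprocessing guarantees the non-orthogonal partner $a_1$ needed for the gap triple, a point the paper's soundness argument simply assumes (``we also suppose there exists $(a',b)$ with $\|a'-b\|_\infty=3$''); what the paper's route buys is modularity and a completeness case with a more uniform metric. Two small remarks: your hardness source should be stated as the Orthogonal Vectors Hypothesis (implied by \SETH via Williams's reduction), which the paper also notes suffices; and identical vectors inside $A$ or inside $B$ give coincident points in your instance, so you should either deduplicate in near-linear time or observe that zero-distance pairs do not affect either the ultrametric property in the NO case or the distortion bound in the YES case.
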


We also provide inapproximability results for the Euclidean metric by ruling out $(1+o(1))$-approximation algorithms for \BUF running in time $n^{1+o(1)}$ albeit under a more non-standard hypothesis that we motivate and introduce in this paper (see Theorem~\ref{thm:ellp} for details).

\paragraph*{Empirical results}
We implemented our algorithm and performed experiments on three
classic datasets (DIABETES, MICE, PENDIGITS). We compared the results
with classic linkage algorithms (average, complete, single) and Ward's
method from the Scikit-learn library~\cite{scikit-learn}. \vc{add ref
  to scikit learn from their website}\gl{done} For a parameter
$\gamma$ fixed to $\approx 2.5$, our results are as follows. First, as complexity
analysis predicts, the execution of our algorithm is much faster
whenever the dataset becomes large enough: \vc{for which value of
  $\gamma$? to see whether it matches the $n^{1+1/\gamma^2}$
  prediction?} \gl{Parameter set to $2.5$ but this is very rough since it's an asymptotic behavior. Don't understand exactly what do you mean by: ``whether it matches $n^{blabla}$? Do you want to plot the running time curve and see if that looks like a function of the desired form? If yes, we would probably need to plot for lots of datasets, not just 3, no?}  up to $\approx 36$ (resp. $32$, $7$ and $35$) times
faster than average linkage (resp. complete linkage, single linkage
and Ward's method)  for moderate size dataset containing roughly
$10000$ points, and has comparable running time for smaller
inputs. Second, while achieving a much faster running time, the
quality of the ultrametric stays competitive to the distortion
produced by the other linkage algorithms.  Indeed, the maximum
distortion is, on these three datasets, always better than Ward's
method, while staying not so far from the others: in the worst case up to a factor $\approx 5.2$ (resp. $4.3$, $10.5$) against average linkage (resp. complete and single linkages).\vc{maybe states by
  how much worse and see whether it is better than what the theory
  predicts. In which case, we should say so}\gl{done}  This shows that our new
algorithm is a reliable and efficient alternative to the linkage
algorithms when dealing with massive datasets.


\subsection{Related Work}
\label{sec:relatedwork}

Strengthening the foundations for hierarchical representation
of complex data has received a lot of attention over the years.
The thorough study of \cite{Carlsson10} has deepened our understanding
of the linkage algorithms and the inputs for which they produce good
representations, we refer the reader to this work for a more complete
introduction to the linkage algorithms.
Hierarchical representation of data and hierarchical clusterings
are similar problems. 
A recent seminal paper by \cite{dasgupta15} phrasing
the problem of computing a good hierarchical clustering as an
optimization problem has sparked
a significant amount of work mixing theoretical and practical results.
\cite{CKMM18, MW17} showed that 
average-linkage achieves a constant factor approximation to (the dual of)
Dasgupta's function and introduced new algorithms with worst-case and
beyond-worst-case guarantees, see also~\cite{RP16,CC17,CKM17,CCN19,CCN18}.
Single-linkage is also known to be helfpul to identify `flat' clusterings
in some specific settings~\cite{Balcan08}.
We would like to point out that this previous work did not consider
the question of producing an ultrametric that is representative of
the underlying (dis)similarities of the data and in fact most of
the algorithms designed by previous work do not output ultrametrics
at all.
This paper takes a different perspective on the problem of computing
a hierarchical clustering: we are interested in how well the underlying
metric is preserved by the hierarchical clustering.
Also it is worth mentioning that in \cite{ABFPT99,AC11} the authors study various tree embedding with a focus on  average distortion in \cite{AC11}, and  tree metrics (and not ultrametrics) in \cite{ABFPT99}. 

Finally, a related but orthogonal approach to ours was taken in
recent papers by
\cite{cochez2015twister} and \cite{AbboudCH19}.
There, the authors design implementation of average-linkage and Ward's
method that have subquadratic running time by approximating the
greedy steps done by the algorithms. However, their results do not
provide any approximation guarantees in terms of any objective function
but rather on the quality of the approximation of the greedy step
and is not
guaranteed to produce an ultrametric.

\subsection{Organization of Paper}
This paper is organized as follows. In Section~\ref{sec:prelims} we introduce the Farach et al. algorithm. In Section~\ref{sec:approx} we introduce our near linear time approximation algorithm for general metrics, and in Section~\ref{sec:euclid} discuss its realization specifically in the Euclidean metric. In Section~\ref{sec:lower} we prove our conditional lower bounds on fast approximation algorithms. Finally, in Section~\ref{sec:expt} we detail the empirical performance of our proposed approximation algorithm.

\section{Preliminaries}
\label{sec:prelims}

Formally, an
  ultrametric $(X,\Delta)$ is a metric space where for each $x,y,z \in X$,
  $$\Delta(x,y) \le \max(\Delta(x,z),\Delta(z,y)).$$ For all finite point-sets $X$, it can be (always) realized in the following way as well. 
Let $T=(V,E)$ be a finite, rooted tree, and let $L$ denote the leaves of $T$. Suppose $w:V\setminus L\to\mathbb{R}^+$ is a function that assigns positive weights to the internal vertices of $T$ such that the vertex weights are non-increasing along root-leaf paths. Then one can define a distance on $L$ by
$$d_w(\ell,\ell'):=w(\LCA(\ell,\ell')),$$
where $\LCA$ is the least common ancestor. 
This is an ultrametric on $L$.

We consider the \textsc{Best Ultrametric Fit} problem (\BUF), namely:
\begin{itemize}
\item Input: a set $V$ of $n$ elements $v_1,\ldots,v_n$ and a weight
  function $w : V \times V \mapsto \R$.
  
\item Output: an ultrametric $(V,\Delta)$ such that $\forall v_i,v_j \in V$,  $w(v_i,v_j) \leq \Delta(v_i,v_j) \leq
  \alpha\cdot w(v_i,v_j)$, for the minimal value $\alpha$.
\end{itemize}

Note that we will abuse notation slightly and, for an edge
$e=(v_i,v_j)$, write $w(e)$ to denote $w(v_i,v_j)$. We write
$\dist^{\opt}$ to denote an optimal ultrametric, and let
$\alpha_{\opt}$ denote the minimum $\alpha$ for which
$\forall v_i,v_j \in V$,
$w(v_i,v_j) \leq \dist^{\opt}(v_i,v_j) \leq \alpha\cdot w(v_i,v_j)$.

We say that an ultrametric $\widehat{\dist}$ is a \emph{$\gamma$-approximation} to \BUF if
$\forall v_i,v_j \in V$,  $w(v_i,v_j) \leq \widehat{\dist}(v_i,v_j) \leq
\gamma \cdot \alpha_{\opt}\cdot w(v_i,v_j)$.
\vc{Alternatively, could be defined as
  $w(v_i,v_j) \leq \widehat{\dist}(v_i,v_j) \leq
  \gamma \cdot \dist^{\opt}(v_i,v_j)$. Not sure what is easier to prove.}
\gl{seems fine the way you defined it}
\cs{What about $1/\gamma\cdot w(v_i,v_j) \leq \widehat{\dist}(v_i,v_j) \leq
  \gamma \cdot \dist^{\opt}(v_i,v_j)$? Are these two equivalent? The lower bounds I prove hold for this relaxed notion as well. }

\gl{I don't know if this is equivalent but stating the way it is stated for now is maybe slightly better because it's the way Farach states it so we can use their results as a blackbox (I don't have a strong opinion on this, though)}

\vc{Ok, so maybe let's stick with this so that we don't have to change
anything in the rest of the paper today.}

\subsection{Farach-Kannan-Warnow's Algorithm}

Farach et al.\ \cite{FKW95} provide an $O(n^2)$ algorithm to solve a
``more general'' problem (\ie that is such that an optimal algorithm for this problem can
  be used to solve \BUF),
the so-called ``sandwich
problem''. In the sandwich problem, the input consists of set $V$ of $n$ elements
$v_1,\ldots,v_n$ and two weight functions $w_{\ell}$ and $w_h$, and
the goal is to output an ultrametric $(V,\dist)$ such that
$\forall v_i,v_j \in V$,
$w_{\ell}(v_i,v_j) \leq \dist(v_i,v_j) \leq \alpha \cdot
w_h(v_i,v_j)$ for the minimal $\alpha$. Observe that an algorithm that
solves the sandwich problem can be used to solve \textsc{Best
  Ultrametric Fit} by setting $w_{\ell} = w_h = w$.


We now review the algorithm of~\cite{FKW95}.
Given a tree $T$ over the elements of $V$ and an
edge $e \in T$, removing $e$ from  $T$ creates two connected
components, we call $L(e)$ and $R(e)$ the set of elements in these connected
components respectively. Given $L(e)$ and $R(e)$, we define $P(e)$ to be the
set of pairs of elements $v_i \in L(e)$ and $v_j \in R(e)$ such that
the maximum weight of an edge of the path from $v_i$ to $v_j$ in $T$ is $w_{\ell}(e)$.

A \textbf{cartesian tree} of a weighted tree $T$ is a rooted tree
$T_C$ defined as follows: the root of $T_C$ corresponds to the edge of
maximal weight and the two children of $T_C$ are defined recursively
as the cartesian trees of $L(e)$ and $R(e)$, respectively. The leaves
of $T_C$ correspond to the nodes of $T$. Each node has an associated
height. The height of any leaf is set to $0$. For a non-leaf node
$u \in T_C$, we know that $u$ corresponds, by construction, to an edge
$e_u$ in $T$, which is the first edge (taken in decreasing order w.r.t.\
their weights) that separates $v_i$ from $v_j$ in $T$. Set the height
of $u$ to be equal to the weight of $e_u$ in $T$. A cartesian tree
$T_C$ naturally induces an ultrametric $\dist$ on its leaves: the distance
between two points $v_i$ and $v_j$ (\ie two leaves of $T_C$) is defined as
the height of their least common ancestor in $T_C$.


Finally, we define the
\textbf{cut weight} of edge $e$ to be
$$\cw(e) = \max_{(v_i,v_j) \in P(e)} w_{\ell}(v_i,v_j).$$

The algorithm of~\cite{FKW95} is as follow:
\begin{enumerate}
\item Compute a minimum spanning tree (MST) $T$ over the complete graph $G_h$ defined on $V$ and with
  edge weights $w_h$;
\item Compute the cut weights with respect to the tree $T$;
\item Construct the cartesian tree $T_C$ of the tree $T'$ whose structure is identical to $T$ and the
  distance from an internal node of $T_C$ to the leaves of its subtree is given by the cut weight
  of the corresponding edge in $T$.
\item Output the ultrametric induced by the tree metric of $T_C$.
\end{enumerate}

The following theorem is proved in~\cite{FKW95}:
\begin{theorem}
  \label{thm:farach}
  Given two weight functions $w_{\ell}$ and $w_h$, the above algorithm outputs an ultrametric $\dist$ such that for all $v_i, v_j \in V$
  $$w_{\ell}(v_i,v_j) \leq \dist(v_i,v_j) \leq \alpha_{\opt} \cdot w_h(v_i,v_j)$$
  for the minimal $\alpha_{\opt}$.
\end{theorem}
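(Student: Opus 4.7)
The plan is to characterize the distance $\Delta(v_i, v_j)$ produced by the algorithm, and then establish the sandwich inequality in two separate steps. Writing $\pi(v_i, v_j)$ for the unique path in the MST $T$ from $v_i$ to $v_j$, the key structural observation is the standard path-maximum property of Cartesian trees: the height of the LCA of $v_i$ and $v_j$ in $T_C$ equals the largest height of a node along the Cartesian-tree path joining them, which by construction is the cut weight of the heaviest edge on $\pi(v_i, v_j)$ with respect to $w_h$. Hence $\Delta(v_i, v_j) = \max_{e \in \pi(v_i, v_j)} \cw(e)$, and all subsequent work reduces to controlling $\cw(e)$.

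For the lower bound $w_\ell(v_i, v_j) \le \Delta(v_i, v_j)$, I would take $e^\star$ to be the maximum-$w_h$-weight edge on $\pi(v_i, v_j)$. By the defining property of $P(e^\star)$, the pair $(v_i, v_j)$ itself lies in $P(e^\star)$, so $\cw(e^\star) \ge w_\ell(v_i, v_j)$, and this immediately implies $\Delta(v_i, v_j) \ge \cw(e^\star) \ge w_\ell(v_i, v_j)$.

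For the upper bound, I would first prove the edge-level inequality $\cw(e) \le \alpha_{\opt} \cdot w_h(e)$ for every $e \in T$. Fix such an $e$ and a pair $(u, v) \in P(e)$. Applying the ultrametric inequality telescopically along $\pi(u, v)$ to the optimal ultrametric gives $\dist^{\opt}(u, v) \le \max_{f \in \pi(u, v)} \dist^{\opt}(f)$, where I use $\dist^{\opt}(f)$ to denote the optimal distance between the endpoints of edge $f$. Since $\dist^{\opt}(f) \le \alpha_{\opt} \cdot w_h(f)$ by the definition of $\alpha_{\opt}$, and since $e$ is the $w_h$-heaviest edge on $\pi(u, v)$ by the defining property of $P(e)$, the right-hand side is at most $\alpha_{\opt} \cdot w_h(e)$. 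Combined with the sandwich lower bound $w_\ell(u, v) \le \dist^{\opt}(u, v)$, this yields $w_\ell(u, v) \le \alpha_{\opt} \cdot w_h(e)$, and taking the maximum over $(u, v) \in P(e)$ gives $\cw(e) \le \alpha_{\opt} \cdot w_h(e)$.

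To finish, I would invoke the MST cycle property: since $T$ is a minimum spanning tree under $w_h$, every edge $f$ on $\pi(v_i, v_j)$ satisfies $w_h(f) \le w_h(v_i, v_j)$ (otherwise swapping $f$ for $(v_i, v_j)$ would strictly reduce the total MST weight). Combining with the edge-level inequality then yields $\Delta(v_i, v_j) = \max_{f \in \pi(v_i, v_j)} \cw(f) \le \alpha_{\opt} \cdot \max_{f \in \pi(v_i, v_j)} w_h(f) \le \alpha_{\opt} \cdot w_h(v_i, v_j)$, completing the proof. The main obstacle, in my view, is making the identification between $\Delta(v_i, v_j)$ and the path-maximum of cut weights fully rigorous: the Cartesian-tree/LCA identity is standard, but the algorithm uses cut weights as heights rather than the $w_h$ weights themselves, so one must verify that this substitution preserves the LCA-height interpretation.
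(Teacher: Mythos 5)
The paper never proves Theorem~\ref{thm:farach} at all --- it is imported as a black box from \cite{FKW95} --- so there is no internal proof to compare against; judged on its own, your argument is correct and is essentially the standard Farach--Kannan--Warnow correctness argument: the path-maximum characterization of the Cartesian-tree ultrametric, the membership of $(v_i,v_j)$ in $P(e^\star)$ for the $w_h$-heaviest edge $e^\star$ on the tree path (lower bound), the edge-level bound $\cw(e)\le \alpha\opt\, w_h(e)$ obtained by telescoping the strong triangle inequality of an optimal sandwiched ultrametric, and the MST cycle property (upper bound). One sentence does need repair: you first assert that the LCA height equals ``the cut weight of the heaviest edge on $\pi(v_i,v_j)$ with respect to $w_h$,'' but the Cartesian tree in step~3 is keyed on the cut weights themselves, not on $w_h$, so the correct identity --- and the one every subsequent step of your proof actually uses --- is $\Delta(v_i,v_j)=\max_{e\in\pi(v_i,v_j)}\cw(e)$; the two quantities differ in general because cut weights need not be monotone in the $w_h$-weights (either characterization would in fact suffice to finish, since the $w_h$-heaviest edge version also yields both bounds, but the write-up should state and justify one of them consistently). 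Two minor points worth making explicit: the paper's definition of $P(e)$ (``the maximum weight of an edge of the path \ldots is $w_\ell(e)$'') contains a typo, and the intended condition, which you correctly use, is that $e$ attains the maximum tree weight (i.e., $w_h$-weight) on the path --- with that reading $P(e)$ is never empty, since the endpoints of $e$ belong to it, so $\cw(e)$ is well defined; and the existence of an optimal ultrametric attaining the minimal $\alpha\opt$ (used when you write $w_\ell\le\dist\opt\le\alpha\opt\, w_h$) holds for finite $V$ by a routine compactness argument and deserves a half-sentence.
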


\section{\textsc{ApproxULT}: An Approximation Algorithm for \BUF}\label{sec:approx}
In this section, we describe a new approximation
algorithm for \BUF and prove its correctness.
We then show in the next section how it can be implemented efficiently
for inputs in the Euclidean metric.

Given a spanning tree $T$ over a graph $G$, any edge
$e=(v_i,v_j) \in G \setminus T$ induces a unique cycle $C^T_e$ which consists of the
union of $e$ and the unique path from $x$ to $y$ in $T$. We say that a
tree $T$ is a \emph{$\gamma$-approximate Kruskal tree} (or shortly a $\gamma$-KT) if
$$\forall e \in G\setminus T, w(e) \geq \frac{1}{\gamma} \max_{e' \in C^T_e}{w(e')}. $$

Moreover, given a tree $T$ and and an edge $e$ of $T$, we say that
$\beta \in \RR$ is a $\gamma$-estimate of $\cw(e)$ if
$\cw(e) \le \beta \le \gamma \cdot \cw(e)$. By extension, we say that a
function
$$\acw: V \times V \mapsto \R$$
is a $\gamma$-estimate of the cut weights $\cw$ if, for any edge $e$, $\acw(e)$ is a
$\gamma$-estimate of $\cw(e)$.

The rest of this section is dedicated to proving that the following
algorithm
achieves a $\gamma \delta$-approximation to \BUF,
for some parameters $\gamma \ge 1,\delta \ge 1$ of the algorithm.
\vc{parameters of alg or one of them is an absolute constant?}
\gl{Done. Yes, these are parameters of the algo. We can choose any combination of $\gamma \ge 1$ and $\delta\ge 1$ that we want. But for our Implementation (section 3), we instantiate $\delta$ to be $5$. I will write something in the corresponding section.}

\begin{enumerate}
\item Compute a $\gamma$-KT $T$ over the complete graph $G_h$ defined on
  $V$ and with
  edge weights $w_h$;
\item Compute a $\delta$-estimate $\acw$ of the cut weights of
  all the edge of the tree $T$;
\item Construct the cartesian tree $T_C$ of the tree $T'$ whose structure is identical to $T$ and the
  distance from an internal node of $T_C$ to the leaves of its subtree is given by the $\acw$
  of the corresponding edge in $T$.
\item Output the ultrametric $\dist$ over the leaves of $T_C$.
\end{enumerate}

We want to prove the following:

\begin{theorem}
  \label{thm:main}
    For any $\gamma \ge 1, \delta \ge 1$, the above algorithm outputs an ultrametric $\dist$ which is a
  $\gamma \delta$-approximation to \BUF, meaning
  that for all $v_i, v_j \in V$
  $$w_{\ell}(v_i,v_j) \leq \dist(v_i,v_j) \leq \gamma \cdot \delta \cdot \alpha_{\opt} \cdot w_h(v_i,v_j)$$
\end{theorem}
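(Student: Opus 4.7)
The plan is to reduce to Farach-Kannan-Warnow's Theorem~\ref{thm:farach} via a weight reweighting trick. Define an auxiliary weight function $w_h'$ on $V \times V$ by $w_h'(e) = w_h(e)$ for every tree edge $e \in T$, and $w_h'(e) = \gamma \cdot w_h(e)$ for every non-tree edge. The first key observation is that $T$ is an MST of $(V, w_h')$: for every non-tree edge $e$, the $\gamma$-KT property gives $w_h(e) \ge \frac{1}{\gamma} \max_{e' \in C_e^T} w_h(e')$, hence $w_h'(e) = \gamma\, w_h(e) \ge \max_{e' \in C_e^T} w_h(e') = \max_{e' \in C_e^T} w_h'(e')$, since $C_e^T \setminus \{e\}$ consists entirely of tree edges, on which $w_h$ and $w_h'$ coincide. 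Thus $T$ satisfies the MST cycle property under $w_h'$. Note also that $w_h \le w_h' \le \gamma \cdot w_h$ everywhere.

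A second observation is that the cut weights $\cw$ of the edges of $T$ depend only on the tree structure and on $w_\ell$: while the definition of $P(e)$ uses the ranking of tree edges to identify the bottleneck, this ranking under $w_h$ and under $w_h'$ is identical, as the two functions agree on tree edges. So the cut weights that our algorithm $\delta$-estimates coincide with the FKW cut weights for the sandwich instance $(w_\ell, w_h')$ computed on the MST $T$. Let $\dist'$ denote the Cartesian-tree ultrametric that would be produced using the exact cut weights $\cw$ in place of $\acw$, and let $\alpha'_{\opt}$ be the optimal sandwich parameter for $(w_\ell, w_h')$. Since $w_h' \ge w_h$, any ultrametric feasible for $(w_\ell, w_h)$ with parameter $\alpha$ is also feasible for $(w_\ell, w_h')$ with parameter $\alpha$, so $\alpha'_{\opt} \le \alpha_{\opt}$. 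Applying Theorem~\ref{thm:farach} to $(w_\ell, w_h')$ using $T$ as MST then yields, for all $v_i, v_j \in V$,
$$w_\ell(v_i, v_j) \;\le\; \dist'(v_i, v_j) \;\le\; \alpha'_{\opt} \cdot w_h'(v_i, v_j) \;\le\; \gamma \cdot \alpha_{\opt} \cdot w_h(v_i, v_j).$$

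It remains to handle the $\delta$-estimate in step 2. By the Cartesian-tree construction, for every pair $(v_i, v_j)$ both $\dist(v_i, v_j)$ and $\dist'(v_i, v_j)$ equal the maximum of $\acw$, respectively $\cw$, taken over the edges of the unique $v_i$-to-$v_j$ path in $T$. Since $\cw(e) \le \acw(e) \le \delta \cdot \cw(e)$ edge-wise, we immediately obtain $\dist'(v_i, v_j) \le \dist(v_i, v_j) \le \delta \cdot \dist'(v_i, v_j)$. Combining with the previous display gives $w_\ell(v_i, v_j) \le \dist(v_i, v_j) \le \gamma \delta \cdot \alpha_{\opt} \cdot w_h(v_i, v_j)$, which is exactly the claim. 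The main technical subtlety I anticipate is justifying that Theorem~\ref{thm:farach} applies to our specific tree $T$ rather than to whatever MST FKW's algorithm happens to select internally; this is harmless because the FKW analysis depends only on the MST cycle property, so its guarantee transfers to any MST of $(V, w_h')$, in particular to $T$ (which can be enforced by appropriate tiebreaking in the MST step).
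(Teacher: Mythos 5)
Your proposal is correct and follows essentially the same route as the paper's proof: the same auxiliary weight function ($w_h$ on tree edges, $\gamma\, w_h$ off-tree), the same observation that $T$ is an exact MST of the reweighted graph so Theorem~\ref{thm:farach} applies with $\alpha'_{\opt}\le\alpha_{\opt}$, and the same edge-wise comparison showing the $\delta$-estimated Cartesian-tree ultrametric is within a factor $\delta$ of the exact one. The only differences are cosmetic: you verify the MST claim via the cycle property and phrase the last step through the path-maximum characterization of the Cartesian-tree distance, whereas the paper uses an edge-exchange argument and the ``first separating edge'' formulation.
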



\begin{proof}
  $ $
  
  \noindent \textbf{First step: }we prove that the $\gamma$-KT $T$ computed at
  the first step of the algorithm can be seen an exact MST for a
  complete weighted graph $G'$ defined on $V$ and with a weight
  function $w'$ satisfying

  $$ \forall v_i, v_j \in V, w'(v_i, v_j) \leq \gamma \cdot w_h(v_i, v_j).$$

  We construct $w'$ in the following way. For each pair of points $(v_i,v_j)$:
  \begin{itemize}
  \item If $(v_i,v_j) \in T$, then set $w'(v_i,v_j) = w_h(v_i,v_j)$
  \item If $(v_i,v_j) \not \in T$, then set $w'(v_i,v_j) = \gamma w_h(v_i,v_j)$.
  \end{itemize}

  By construction, it is clear that $w' \leq \gamma \cdot w_h$. To see
  that $T$ is an (exact) MST of $G'$, consider any MST $F$ of $G'$. If
  $e = (v_i,v_j) \in F \setminus T$, then consider the first edge $e'$
  in the unique path from $v_i$ to $v_j$ in $T$ that reconnects
  $F\setminus e$. By definition of $w'$, we have $w'(e') = w_h(e')$
  and $w'(e) = \gamma w_h(e)$. Since $T$ is a $\gamma$-KT, we also
  have that $w_h(e) \geq \frac{1}{\gamma} w_h(e')$. Therefore
  $w'(e') \leq w'(e)$ and $\{F \cup e'\} \setminus e$ is a spanning
  tree of $G'$ of weight smaller than or equal to the weight of
  $F$. This proves that $\{F \cup e'\} \setminus e$ is also a
  MST. Doing this process for all edges not in $T$ gives eventually
  $T$ and proves that $T$ is also a MST of $G'$, as desired.

  \noindent \textbf{Second step.} Observe that the weight function $w_h$ is
  not involved in steps 2, 3, and 4 of the algorithm. Therefore, if
  steps 2, 3, and 4 of the algorithm were made without approximation (meaning
  that we compute the exact cut weights $\cw$ associated to the
  $\gamma$-KT tree $T$ and we output the ultrametric to the
  corresponding cartesian tree), then the output would be an ultrametric
  $\dist$ such that for all $v_i, v_j \in V$
  \begin{equation}
    \label{eq:1}
    w_{\ell}(v_i, v_j) \leq \dist(v_i, v_j) \leq \alpha'_{\opt} \cdot w'(v_i, v_j)
  \end{equation}
  for the minimal such $\alpha'_{\opt}.$ This follows directly from
  Theorem~\ref{thm:farach} and the fact that $T$ is an exact MST for
  the graph $G'$ defined above. Note that
  $\alpha'_{\opt} \leq \alpha_{\opt}$ where $\alpha_{\opt}$ denotes
  the minimal constant such that there exists an ultrametric between
  $w_l$ and $\alpha_{\opt} \cdot w_h$.

  Now, consider the ultrametric $\dist^{T_C}$ associated to $T$ and a $\delta$-estimate $\acw$ of the cut weights. We claim that for all $v_i, v_j \in V$
  \begin{equation}
    \label{eq:2}
    \dist(v_i, v_j) \leq \dist^{T_C}(v_i, v_j) \leq \delta \cdot \dist(v_i, v_j).
  \end{equation}

  To see this, take any $v_i, v_j \in V$.  By definition,
  $\dist^{T_C}(v_i,v_j) = \acw(e)$ for the first edge $e$ (taken in
  decreasing order w.r.t.\ to $\acw$) that separates $v_i$ from $v_j$ in
  $T$. Let $e_{v_i,v_j}$ be the first edge that separates $v_i$ from
  $v_j$ w.r.t.\ to the actual cut weights $\cw$. Again, we have by
  definition that $\dist(v_i, v_j) = \cw(e)$. We have that
  $\acw(e) \geq \acw(e_{v_i,v_j})$ since $e$ is the first edge w.r.t.\
  $\acw$. Moreover $\acw(e_{v_i, v_j}) \geq \cw(e_{v_i, v_j})$ because
  $\acw$ is a $\delta$-estimate of the cut weights: this gives us the
  first desired inequality
  $$\dist^{T_C}(v_i, v_j) \geq \dist(v_i, v_j).$$

  The upper bound is similar. We know that
  $\acw(e) \leq \delta \cdot \cw(e)$ since $\acw$ is a $\delta$-estimate. We
  also have that $\cw(e) \leq \cw(e_{v_i, v_j})$ since $e_{v_i, v_j}$
  is the first separating edge w.r.t.\ $\cw$. This gives:
  $$\dist^{T_C}(v_i, v_j) \leq \delta \cdot \dist(v_i, v_j).$$

  All together, Equations~\ref{eq:1} and~\ref{eq:2} imply

  \begin{align*}
    w_{\ell}(v_i, v_j) \leq \dist^{T_C}(v_i, v_j) &\leq \gamma \cdot \alpha'_{\opt} \cdot w'(v_i, v_j)\\
                                                  &\leq \gamma \cdot \delta \cdot \alpha'_{\opt} \cdot w_h(v_i, v_j) \\
                                                  &\leq \gamma \cdot \delta \cdot \alpha_{\opt} \cdot w_h(v_i, v_j)
  \end{align*}
  as desired.
\end{proof}

\section{A Fast Implementation of \textsc{ApproxULT} in Euclidean Space
-- Proof of Theorem~\ref{thm:UB:euclid}}\label{sec:euclid}
In this section, we consider inputs of \BUF that consists
of a set of points $V$ in $\RR^d$, and so for which
$w(v_1,v_2) = \|v_1 - v_2\|_2$.
We now explain how to implement \textsc{ApproxULT} efficiently for
$\gamma \ge 1$ and $\delta = 5$.\gl{Here, we have any $\gamma$ but $\delta = 5$}

\paragraph{Fast Euclidean $\gamma$-KT.}
For computing efficiently a $\gamma$-KT of a set of points in a Euclidean
space of dimension $d$, we appeal to the result of
\cite{har2013euclidean}
(if interested in doubling metrics,
one can instead use the bound of \cite{FiltserN18}).
The approach relies on \emph{spanners}; A $c$-spanner 
of a set $S$ of $n$ points in $\RR^d$ is a graph $G = (S,E)$ and a
weight function $w : E \mapsto \RR_+$ such that for any $u,v\in S$,
the shortest path distance in $G$ under the edge weights induced by $w$,
$\dist^G(u,v)$ satisfies $\|u-v\|_2 \le \dist^G(u,v) \le
c \cdot \|u-v\|_2$.

The result of \cite{har2013euclidean} states that there is
an algorithm that for any set $S$ of
 $n$ points in $\RR^d$ produces 
an $O(\gamma)$-spanner for $S$ with $O(n^{1+1/c^2}\log^2 n)$ edges 
in time $O(nd + n^{1+1/c^2}\log^2 n)$.
The algorithm uses the 
locality sensitive hash family of \cite{andoni2006near}, or alternatively
for $\gamma = \sqrt{\log n}$ the  Lipschitz partitions
of \cite{charikar1998approximating}. 

An immediate application of Kruskal 
classic algorithm for computing
a minimum spanning tree on the spanner yields an algorithm
with running time $O(nd + n^{1+1/c^2}\log^3 n)$. Moreover, we claim
that a minimum spanning tree on a $c$-spanner $G$ is indeed a $c$-KT
for the original point set. Assume towards contradiction that this
is not the case.
Then there exists an edge $e = (u,v) \not\in T$ such that
$\|u-v\|_2 < \max_{(x,y) \in C_e^T} \|x-y\|_2/c$. By correctness of the
$c$-spanner we have that $\dist^G(u,v) \le c\|u-v\|_2 <
\max_{(x,y) \in C_e^T} \|x-y\|_2 \le \max_{(x,y) \in C_e^T} \dist^G(x,y)$.
A contradiction to the fact that $T$ is an MST of the $c$-spanner.

\paragraph{Fast Estimation of the Cut Weights.}
\gl{Fixed the complexity, it should be $O(nd + n \log n)$ I guess}We explain how to compute in time $O(nd + n \log n)$ a $5$-estimate of the cut weights. To do this, we maintain a disjoint-set data structure on
$X$ with the additional property that each equivalence class $C$ (we
call such an equivalence class \emph{cluster}) has a special vertex
$r_C$ and we store $m_C$ the maximal distance between $r_C$ and a
point in the cluster. We now consider the edges of the MST $T$ in
increasing order (w.r.t.\ their weights). When at edge $e = (x,y)$, we look at the two
clusters $C$ and $D$ coming from the equivalence classes that
respectively contain $x$ and $y$. We claim that
$$E = 5\cdot \max (d(r_C,r_D), m_C-d(r_C,r_D), m_D-d(r_C,r_D)) $$
is a $5$-approximation of the cut weight for $e$. To see this, observe that if $x',y'$ are the farthest points respectively in $C, D$, then:\allowdisplaybreaks
\begin{align*}
  d(x',y') &\leq d(x',r_C) + d(r_C,r_D) + d(r_D,y') \\
         &\leq d(x',r_C)-d(r_C,r_D) + 3d(r_C,r_D) + d(r_D,y')-d(r_C,r_D)\\
         &\leq 5.\max (d(r_C,r_D), m_C-d(r_C,r_D), m_D-d(r_C,r_D))\leq E
\end{align*}
On the other hand\allowdisplaybreaks
\begin{align*}
  d(r_C,r_D) & \leq d(x',y')\\
  m_C - d(r_C,r_D) & \leq d(x',r_D) \leq d(x',y') \\  
  m_D - d(r_C,r_D) & \leq d(y',r_C) \leq d(x',y') 
\end{align*}
and therefore $E \leq 5\cdot d(x',y')$. Finally,  if we consider the path from $x'$ to $y'$ in $T$, it is clear that the pair $(x',y')$ is in $P(e)$, and the bound on $CW(e)$ follows. 

Merging $C$ and $D$ can simply be done via a classic disjoint-set data
structure. Thus, the challenge is to update $m_{C \cup D}$. To do so,
we consider the smallest cluster, say $D$, query $d(x,r_C)$ for each
point $x \in D$ and update accordingly $r_{C\cup D}$ if a bigger value
is found. Therefore the running time to update $m_{C \cup D}$ is
$O(|D|\times d)$ (we compute $|D|$ distances in a space of dimension
$d$). The overall running time to compute the approximate cut weights
is $O(nd + n \log n)$: sorting the edges requires $O(n \log n)$ and
constructing bottom-up the cut-weights with the disjoint-set data
structure takes $O(nd + n \alpha(n))$, where $\alpha(n)$ denotes the
inverse of the Ackermann function (this part comes from the disjoint-set
structure). To conclude, note that $n \alpha(n)$ is much smaller than
$n \log n$.


\section{Hardness of \BUF for High-Dimensional  
    Inputs}\label{sec:lower}
We complement Theorem~\ref{thm:UB:euclid}
with a  hardness of approximation result in this section.
Our lower bound is based on the well-studied Strong Exponential Time Hypothesis (\SETH) \cite{IP01,IPZ01,CIP06} which roughly states that SAT on $n$ variables cannot be solved in time less than $2^{n(1-o(1))}$. \SETH is a popular assumption to prove lower bounds for problems in \P\ (see the following surveys \cite{Williams15,Williams16,Vir18,RW19} for a discussion).

\begin{theorem}\label{thm:ellinf}
  Assuming \SETH, for every $\varepsilon>0$,  no algorithm running in time $n^{2-\varepsilon}$ can, given as input an instance of  \BUF consisting $n$ points  of dimension $d:=O_{\varepsilon}(\log n)$ in $\ell_\infty$-metric, distinguish between the following two cases.
  \begin{description}
  \item[Completeness:] There is an isometric ultrametric embedding. 
  \item[Soundness:] The distortion of the best ultrametric embedding is at least $\nicefrac{3}{2}$.
  \end{description}
\end{theorem}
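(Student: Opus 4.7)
My plan is to prove Theorem~\ref{thm:ellinf} by a fine-grained reduction from the Orthogonal Vectors problem (\textsc{OV}), which under \SETH requires $n^{2-o(1)}$ time on two sets of $n$ binary vectors in dimension $d = O_\varepsilon(\log n)$. Given an \textsc{OV} instance with sets $A, B \subseteq \{0,1\}^d$ of size $n$, I construct a \BUF instance of $2n$ points in $\R^d$ under $\ell_\infty$ by setting $p_a := 2a$ for every $a \in A$ and $q_b := \mathbf{1} - 2b$ for every $b \in B$. A coordinate-wise calculation gives $\|p_a - p_{a'}\|_\infty = 2$ for distinct $a, a' \in A$, $\|q_b - q_{b'}\|_\infty = 2$ for distinct $b, b' \in B$, and $\|p_a - q_b\|_\infty = 1$ if $\langle a, b\rangle = 0$ while $\|p_a - q_b\|_\infty = 3$ otherwise. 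The key identity is that $|2a_i + 2b_i - 1| = 3$ exactly when $a_i = b_i = 1$, and equals $1$ on each of the other three bit patterns.

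For completeness, if the \textsc{OV} instance has no orthogonal pair, then all intra-cluster distances equal $2$ and all inter-cluster distances equal $3$, so every triple has distance multiset $\{2,2,2\}$ or $\{2,3,3\}$, each satisfying the isosceles rule; hence the metric is already an ultrametric and the identity gives an isometric embedding. For soundness, let $(a^*, b^*)$ be an orthogonal pair so $w(a^*, b^*) = 1$. In the typical case where some $b' \in B \setminus \{b^*\}$ satisfies $\langle a^*, b'\rangle \geq 1$, the triple $\{a^*, b^*, b'\}$ carries weights $1, 3, 2$; any ultrametric $\Delta \geq w$ has $\Delta(a^*, b') \geq 3$, and the isosceles rule forces either $\Delta(a^*, b^*) \geq 3$ (distortion $\geq 3$) or $\Delta(b^*, b') \geq 3$ (distortion $\geq 3/2$). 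Degenerate situations in which $a^*$ is orthogonal to every $b \in B$ yield distortion $\geq 2$ via an analogous argument on a triple $\{a^*, b_1, b_2\}$ using the within-cluster edge of weight $2$.

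The reduction runs in $O(nd)$ time and outputs $N = 2n$ points in dimension $O_\varepsilon(\log n)$, so any algorithm solving \BUF in $\ell_\infty$ in time $N^{2-\varepsilon}$ that distinguishes isometric instances from those of distortion $\geq 3/2$ would solve \textsc{OV} in time $n^{2-\varepsilon}$, contradicting \SETH. The main design challenge is crafting the $\ell_\infty$ encoding so that all three families of distances are simultaneously controlled: intra-cluster differences must be capped at $2$ while a single inter-cluster coordinate spikes to $3$ exactly when $a_i = b_i = 1$. The symmetric shift-and-scale $a \mapsto 2a$, $b \mapsto \mathbf{1} - 2b$ achieves this by mapping the four patterns $(0,0), (0,1), (1,0), (1,1)$ to coordinate differences $1, 1, 1, 3$; a secondary subtlety, handled by the triple-by-triple case split above, is verifying that the soundness gap is independent of how many orthogonal pairs the yes-instance has.
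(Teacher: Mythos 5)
Your proof is correct, but it takes a genuinely different route from the paper. The paper invokes, as a black box, the \SETH-based hardness of gap Bichromatic Closest Pair in $\ell_\infty$ due to David et al.\ (Theorem~\ref{thm:DKL19}), whose instances come with the structural guarantees (intra-set distances exactly $2$, cross distances in $\{1,3\}$), and then applies the general three-point distortion lemma (Lemma~\ref{lem:distortion}) with $\gamma=2,\gamma'=1$ to get the $\nicefrac{3}{2}$ gap. You instead reduce directly from \textsc{OV} and re-derive that gadget yourself: the map $a\mapsto 2a$, $b\mapsto \mathbf{1}-2b$ indeed gives coordinate differences $1,1,1,3$ on the four bit patterns, hence cross distance $1$ iff orthogonal and $3$ otherwise, and intra distances $2$ (for distinct vectors), so your completeness case is even cleaner than the paper's: the metric itself is an ultrametric and the identity is isometric, whereas the paper builds an explicit weighted tree. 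Your soundness argument on the triple $\{a^*,b^*,b'\}$ is exactly the content of Lemma~\ref{lem:distortion}, just instantiated inline with the normalization $\Delta\ge w$. Notably, you also treat the degenerate case where $a^*$ is orthogonal to \emph{all} of $B$ (getting distortion $\ge 2$ from a triple with weights $1,1,2$), a case the paper's proof simply assumes away by supposing a cross pair at distance $3$ exists; your case split makes the argument airtight where the paper is slightly informal. What the paper's modularity buys is reuse: Lemma~\ref{lem:distortion} serves again in the Euclidean lower bound (Theorem~\ref{thm:ellp}), and citing \cite{DKL19} avoids redoing the gadget; what your version buys is a self-contained, elementary reduction with the same quantitative conclusion. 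The only nit: you should note that duplicate vectors in $A$ or $B$ (which would give coincident points, distance $0$) can be removed in near-linear time without affecting the \textsc{OV} answer, so the intra-cluster distance-$2$ claim and the degenerate-case triple $\{a^*,b_1,b_2\}$ are legitimate.
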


Note that the above theorem morally\footnote{We say ``morally'' because our hardness results are for the decision version, but doesn't immediately rule out algorithms that find approximately optimal embedding, as computing the distortion of an embedding (naively) requires $n^2 $ time. So the search variant cannot be naively reduced to the decision variant.} rules out approximation algorithms running in subquadratic time which can approximate the best ultrametric to $\nicefrac{3}{2}-o(1)$ factor. 

Finally, we  remark that all the results in this section can be based on a weaker assumption called the Orthogonal Vectors Hypothesis \cite{W05} instead of \SETH.
Before we proceed to the proof of the above theorem, we prove below a key technical lemma. 

\begin{definition}[Point-set $S^*$]
For every $\gamma,\gamma'\ge 0$ and every $p\in \mathbb{R}_{\ge 1}\cup \{\infty\}$, we define the discrete point-set $S^*(\gamma,\gamma',p):=\{a,a',b\}$ in the $\ell_p$-metric as follows: $$ \|a-b\|_p\le 1,\ \|a-a'\|_p\le 1+\gamma',  \text{ and }\|a'-b\|_p\ge 1+\gamma.$$
\end{definition}

\begin{lemma}[Distortion in Ultrametric Embedding]\label{lem:distortion}
Fix $\gamma,\gamma' \ge 0$ and $p\in \mathbb{R}_{\ge 1}\cup \{\infty\}$. Then we have that any embedding of $S^*(\gamma,\gamma', p):=\{a,a',b\}$ into ultrametric incurs a distortion of at least $\frac{1+\gamma}{1+\gamma'}$.
\end{lemma}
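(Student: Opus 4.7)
The plan is to apply the ultrametric inequality directly to the triple $\{a,a',b\}$ and then argue by a two-case split. Recall that an ultrametric embedding of distortion $\alpha$ means the embedding $\Delta$ satisfies $\|u-v\|_p \le \Delta(u,v) \le \alpha \|u-v\|_p$ for every pair, so the goal is to lower bound $\alpha$ by $\frac{1+\gamma}{1+\gamma'}$.

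First I would exploit the ultrametric triangle inequality on the pair whose lower bound on $\|\cdot\|_p$ is the strongest, namely $a'$ and $b$. Since $\Delta$ is an ultrametric, one has
\[
\Delta(a',b) \le \max\bigl(\Delta(a',a),\,\Delta(a,b)\bigr).
\]
Combined with the contraction-free side of the distortion condition, $\Delta(a',b) \ge \|a'-b\|_p \ge 1+\gamma$, this forces at least one of $\Delta(a',a)$ or $\Delta(a,b)$ to be at least $1+\gamma$.

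Next I would split into the two resulting cases. If $\Delta(a',a) \ge 1+\gamma$, then using $\|a-a'\|_p \le 1+\gamma'$ I obtain
\[
\alpha \;\ge\; \frac{\Delta(a',a)}{\|a-a'\|_p} \;\ge\; \frac{1+\gamma}{1+\gamma'}.
\]
If instead $\Delta(a,b) \ge 1+\gamma$, then using $\|a-b\|_p \le 1$ I obtain
\[
\alpha \;\ge\; \frac{\Delta(a,b)}{\|a-b\|_p} \;\ge\; 1+\gamma \;\ge\; \frac{1+\gamma}{1+\gamma'},
\]
where the last inequality uses $\gamma' \ge 0$. Either way, $\alpha \ge \frac{1+\gamma}{1+\gamma'}$, which is the claim.

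There is no real obstacle: the lemma is essentially a one-step consequence of the ultrametric inequality, and the only care needed is to make sure that (i) both sides of the distortion condition are used (the $\ge$ direction to transfer the lower bound $1+\gamma$ into $\Delta$, and the $\le$ direction to produce the distortion ratios), and (ii) the case $\gamma'>0$ is handled so that dividing by $1+\gamma'$ is tight in Case~1 while Case~2 is only worse. The argument is uniform in $p \in \mathbb{R}_{\ge 1}\cup\{\infty\}$ because it never touches the actual geometry of the $\ell_p$ norm, only the assumed inequalities on the three pairwise distances of $S^*(\gamma,\gamma',p)$.
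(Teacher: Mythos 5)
Your proof is correct and is essentially the paper's argument: the paper also applies the ultrametric inequality to $\Delta(a',b)$, transfers the lower bound $1+\gamma$ through the non-contracting side, and bounds the resulting maximum via the expansion side — it simply chains the inequalities in one line (absorbing your two cases by noting both $\|a-b\|_p\le 1\le 1+\gamma'$ and $\|a-a'\|_p\le 1+\gamma'$) and carries an explicit scaling factor $\alpha$ rather than normalizing the contraction side to $1$. These are only cosmetic differences.
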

\begin{proof} 
Let the distortion of $S^*$ to the ultrametric be at most $\rho$. Let $\tau$ be the embedding into ultrametric with distortion $\rho$ and let  $\Delta$ denote distance in the ultrametric. Let $\alpha\in\mathbb{R}^+$ be the scaling factor of the embedding from the $\ell_p$-metric to the ultrametric.
\begin{align*}
(1+\gamma)\cdot \alpha&\le \Delta(\tau({a'}),\tau({b}))\\
&\le\max\{\Delta(\tau(a),\tau(b)),\Delta(\tau(a),\tau({a'}))\}\\
&\le \rho\cdot (1+\gamma')\cdot \alpha
\end{align*}
Thus we have that $\rho\ge \frac{1+\gamma}{1+\gamma'}$.
\end{proof}

We combine the above lemma with David et al.'s conditional lower bound (stated below) on approximating the Bichromatic Closest Pair problem in the $\ell_\infty$-metric to obtain Theorem~\ref{thm:ellinf}.

\begin{theorem}[\cite{DKL19}]\label{thm:DKL19}
Assuming \SETH, for any $\varepsilon>0$, no algorithm running in time $n^{2-\varepsilon}$,
given  $A,B\subseteq \mathbb{R}^d$ as input, where $|A|=|B|=n$ and $d=O_\varepsilon(\log n)$, distinguish between the following two cases:
\begin{description}
\item[Completeness:] There exists $(a,b)\in A\times B$ such that $\|a-b\|_\infty=1$.
\item[Soundness:] For every $(a,b)\in A\times B$ we have $\|a-b\|_\infty=3$.
\end{description}
Moreover this hardness holds even with the following additional properties:
\begin{itemize}
\item Every distinct pair of points in $A$ (resp.\ $B$) are at distance 2 from each other in the $\ell_\infty$-metric.
\item All pairs of points in $A\times B$ are at distance either 1 or 3 from each other in the $\ell_\infty$-metric.
\end{itemize}
\end{theorem}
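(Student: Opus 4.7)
The plan is to reduce the Orthogonal Vectors (OV) problem to the gap Bichromatic Closest Pair problem in the $\ell_\infty$-metric via a simple coordinate-wise encoding. Recall that OV asks, given sets $A, B \subseteq \{0,1\}^d$ with $|A|=|B|=n$ and $d = O_\varepsilon(\log n)$, whether there is a pair $(a,b) \in A \times B$ with $\langle a, b\rangle = 0$; under \SETH, OV cannot be solved in time $n^{2-\varepsilon}$ for any $\varepsilon > 0$~\cite{W05}. Assume also, by a standard preprocessing step, that the vectors in $A$ (resp.\ in $B$) are pairwise distinct.

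The proposed encoding is the pair of affine maps $\phi : A \to \RR^d$ and $\psi : B \to \RR^d$ given by
\[
\phi(a)_i = 2 a_i, \qquad \psi(b)_i = 1 - 2 b_i, \qquad \text{for every } i \in [d].
\]
The verification consists of three coordinate-wise calculations. For distinct $a,a' \in A$, the coordinate-wise gap $\phi(a)_i - \phi(a')_i = 2(a_i - a'_i)$ takes values in $\{-2,0,2\}$, so $\|\phi(a) - \phi(a')\|_\infty = 2$; symmetrically for distinct $b, b' \in B$. For any cross-color pair $(a,b) \in A \times B$, the $i$-th gap is $|2 a_i + 2 b_i - 1|$, which equals $1$ whenever $(a_i,b_i) \neq (1,1)$ and equals $3$ exactly when $(a_i,b_i)=(1,1)$. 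Taking the max over $i$, the cross-color $\ell_\infty$ distance is $3$ if $\langle a,b\rangle \ge 1$ and $1$ if $\langle a,b\rangle = 0$.

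Completeness, soundness, and both additional structural properties of the theorem follow immediately: a YES-instance of OV produces a cross-color pair at distance $1$; a NO-instance forces every cross-color pair to distance $3$; within-color distinct pairs are at distance exactly $2$; and cross-color distances take only the values $\{1,3\}$. The reduction itself runs in time $O(nd) = \tilde{O}(n)$, so an $n^{2-\varepsilon}$-time algorithm distinguishing the two cases would yield an $n^{2-\varepsilon}$-time algorithm for OV in dimension $O_\varepsilon(\log n)$, contradicting \SETH.

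The technical heart of the proof is choosing the affine transformations so that a single coordinate per OV bit already achieves the sharp $1$-vs-$3$ gap in the cross-color direction while simultaneously forcing within-color gaps to a uniform value of $2$. The main challenge is really just identifying this encoding; one can derive it by solving a small system of absolute-value constraints on $\{0,1\}^2$ so that only the configuration $(1,1)$ produces a coordinate difference of $3$, every other configuration produces $1$, and pairs with equal color produce exactly $2$. Once the encoding is in hand, no further gadgets, no separate identity coordinates, and no blow-up in dimension are required.
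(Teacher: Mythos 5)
Your reduction is correct: with $\phi(a)_i = 2a_i \in \{0,2\}$ and $\psi(b)_i = 1-2b_i \in \{-1,1\}$, the coordinate gap $|2a_i + 2b_i - 1|$ is $3$ exactly on $(a_i,b_i)=(1,1)$ and $1$ otherwise, so cross distances are $1$ iff the pair is orthogonal and $3$ otherwise, while distinct same-side vectors are at distance exactly $2$; the promise, both structural properties, the dimension bound $d = O_\varepsilon(\log n)$, and the $\tilde O(n)$ reduction time all check out, so an $n^{2-\varepsilon}$ distinguisher would break the Orthogonal Vectors hardness implied by \SETH. Note, however, that the paper does not prove this statement at all: Theorem~\ref{thm:DKL19} is imported as a black box from \cite{DKL19}, and the argument you give is essentially the (folklore-style) $\ell_\infty$ encoding underlying that cited result, so you have reconstructed the external proof rather than found a genuinely different route from anything in this paper. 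Two small points worth tightening: (i) the deduplication step should be stated as preserving hardness (duplicates can be removed in near-linear time without changing the OV answer, and one can pad with fresh vectors carrying a new private coordinate, set so that padded cross pairs are non-orthogonal, if one insists on $|A|=|B|=n$ exactly); and (ii) you implicitly use that OV with $d = c_\varepsilon \log n$ is $n^{2-\varepsilon}$-hard under \SETH via \cite{W05}, which is exactly the form needed, so it is fine, but it deserves an explicit constant-tracking sentence since the theorem fixes $d = O_\varepsilon(\log n)$.
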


\begin{proof}[Proof of Theorem~\ref{thm:ellinf}]
Let $(A,B)$ be the input to the hard instances of the Bichromatic Closest Pair problem as given in the statement of Theorem~\ref{thm:DKL19} (where $A,B\subseteq \mathbb{R}^d$ and $|A|=|B|=n$). We show that if for every $(a,b)\in A\times B$ we have $\|a-b\|_\infty=3$ then there is an isometric embedding of $A\cup B$ into an ultrametric and if there exists $(a,b)\in A\times B$ such that $\|a-b\|_\infty=1$ then any embedding of $A\cup B$ to an ultrametric incurs a distortion of $\nicefrac{3}{2}$. Once we show this, the proof of the theorem statement immediately follows.

Suppose that for every $(a,b)\in A\times B$ we have $\|a-b\|_\infty=3$. We construct the following ultrametric embedding. Let $T$ be a tree with root $r$. Let $r$ have two children $c_A$ and $c_B$. Both $c_A$ and $c_B$ each have $n$ leaves which we identify with the points in $A$ and points in $B$ respectively. Then we subdivide the edge between $c_A$ and its leaves and $c_B$ and its leaves.  Notice that any pair of leaves corresponding to two distinct points in $A$ (resp.\ in $B$) are at distance four away in $T$. Also notice that any pair of leaves corresponding to a pair of points in $A\times B$ are at distance six. Therefore the aforementioned embedding is isometric. 

Next, suppose that there exists $(a,b)\in A\times B$ such that $\|a-b\|_\infty=1$. We also suppose that there exists $(a',b)\in A\times B$ such that $\|a'-b\|_\infty=3$. We call Lemma~\ref{lem:distortion} with the point-set $\{a,a',b\}$ and parameters $\gamma=2$ and $\gamma'=1$. Thus we have that even just embedding $\{a,a',b\}$ into an ultrametric incurs distortion of $\nicefrac{3}{2}$.
\end{proof}

One may wonder if one can extend Theorem~\ref{thm:ellinf} to the Euclidean metric to rule out approximation algorithms running in subquadratic time which can approximate the best ultrametric to arbitrary factors close to 1. More concretely, one may look at the hardness of approximation results of \cite{R18,KM19} on Closest Pair problem, and try to use them as the starting point of the reduction. An immediate obstacle to do so is that in the soundness case of the closest pair problem (i.e., the completeness case of the computing ultrametric distortion problem), there is no good bound on the range of all pairwise distances, and thus the distortion cannot be estimated to yield a meaningful reduction.

Nonetheless, we introduce a new complexity theoretic hypothesis below and
show how that extends Theorem~\ref{thm:ellinf} to the Euclidean metric.

\paragraph{Colinearity Hypothesis.}
Let \B denote the $d$-dimensional unit Euclidean ball.  In the Colinearity Problem (\CP), we are given as input a set $A$ of $n$ vectors uniformly and independently sampled from \B, and we move one of these sampled points to be \emph{closer} to the midpoint of two other sampled points. The goal is
to find these three points. More formally, we can write it as a decision problem in the following way. 

Let \Dun{}$(n,d)$ be the distribution which samples $n$ points uniformly and independently from \B. For every $\rho\in [0,1]$, let \Dpl{}$(n,d,\rho)$ be the following distribution: 
\begin{enumerate}
\item Sample $(a_1,\ldots ,a_n)\sim $\Dun{}$(n,d)$.
\item Pick three distinct indices $i,j,k$ in $[n]$ at random.  
\item Let $a_{i,j}$ be the midpoint of $a_i$ and $a_j$. 
\item Let $\tilde a_k$ be $(1-\rho)\cdot a_k + \rho\cdot a_{i,j}$.
\item Output $(a_1,\ldots ,a_{k-1},\tilde a_k,a_{k+1},\ldots ,a_n)$.
\end{enumerate}

Notice that \Dun{}$(n,d)$ $=$\Dpl{}$(n,d,0)$. Also, notice that in \Dpl{}$(n,d,1)$ we have planted a set of three colinear points. The decision problem \CP would then be phrased as follows.

\begin{definition}[\CP]
Let $\rho \in (0,1]$. Given as input a set of $n$ points sampled from \Dun{}$(n,d)$ $\cup$ \Dpl{}$(n,d,\rho)$, distinguish if it was sampled from \Dun{}$(n,d)$ or from \Dpl{}$(n,d,\rho)$.
\end{definition}

The worst case variant of \CP has been studied extensively in computational geometry and more recently in fine-grained complexity. In the worst case variant, we are given a set of $n$ points in $\mathbb{R}^d$ and we would like to determine if there are three points in the set that are colinear. This problem can be solved in time $O(n^2 d)$. It's now known  that this runtime cannot be significantly improved assuming the \textsf{3-SUM} hypothesis \cite{GO95,GO12}. We putforth the following hypothesis on \CP:

\begin{definition}[Colinearity Hypothesis (\CH)]
There exists constants $\rho,\varepsilon>0$ such that no randomized algorithm running in time $n^{1+\varepsilon}$ can decide \CP (with parameters $n,d,\rho$), for every $d\ge O_{\rho,\varepsilon}(\log n)$.  
\end{definition}

Notice that unlike \OVH or \textsf{3-SUM} hypothesis, we are \emph{not} assuming a subquadratic hardness for \CP, but only assume a superlinear hardness, as \CP is closely related to the \emph{Light bulb problem} \cite{V88}, for which we do have subquadratic algorithms \cite{V15,KKK16,A19}. Elaborating, we now provide an informal sketch of a reduction from \CP to the Light bulb problem: given $n$ points sampled from \Dun{}$(n,d)$ $\cup$ \Dpl{}$(n,d,\rho)$, we first apply the sign function (+1 if the value is positive and -1 otherwise) to each coordinate of the sampled points, to obtain points on the Boolean hypercube. Then we only retain each point w.p.\ $\nicefrac{1}{2}$ and discard the rest.  If the points were initially sampled from \Dun{}$(n,d)$ then the finally retained points will look like points sampled uniformly and independently from the Boolean hypercube, whereas,  if the points were initially sampled from \Dpl{}$(n,d,\rho)$ then there are two pairs of points that are $\rho'$-correlated ($\rho'$ depends on $\rho$) after applying the sign function and exactly one of the two pairs is retained with constant probability. 

Returning to the application of \CH to ultrametric embedding, assuming \CH, we  prove the following result.

\begin{theorem}\label{thm:ellp}
  Assuming \CH,  there exists $\varepsilon,\delta>0$ such that no randomized algorithm running in time $n^{1+\varepsilon}$ can given as input an instance of  \BUF consisting of $n$ points  of dimension $d:=O_{\varepsilon,\delta}(\log n)$ in Euclidean metric distinguish between the following two cases. 
  \begin{description}
  \item[Completeness:] The distortion of the best ultrametric embedding is at most $1+\nicefrac{\delta}{2}$. 
  \item[Soundness:] The distortion of the best ultrametric embedding is at least $1+\delta$.
  \end{description}
\end{theorem}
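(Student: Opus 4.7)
The plan is to reduce \CP directly to the Euclidean \BUF distinguishing problem. Fix the constants $\rho,\varepsilon>0$ from \CH, choose a small constant $\delta=\delta(\rho)>0$ to be determined, and set $d=O_{\varepsilon,\delta}(\log n)$ large enough both for \CH to apply and for the concentration arguments below. Given $n$ input points drawn from \Dun{}$(n,d)\cup$\Dpl{}$(n,d,\rho)$, we simply feed them as a Euclidean \BUF instance into the hypothetical $n^{1+\varepsilon}$-time algorithm and output ``uniform'' if the reported distortion is at most $1+\delta/2$ and ``planted'' otherwise. It then remains to check that w.h.p.\ the \Dun{} case falls in the completeness regime and the \Dpl{} case falls in the soundness regime.

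For the completeness direction (input from \Dun{}), standard high-dimensional concentration of uniform points in the unit ball---namely, $\|a\|_2^2$ concentrates around $\tfrac{d}{d+2}\approx 1$, and inner products $\langle a_i,a_j\rangle$ are subgaussian with variance $O(1/d)$---combined with a union bound over all $\binom{n}{2}$ pairs, implies that w.h.p.\ every pairwise distance lies in $[\sqrt{2}(1-\eta),\sqrt{2}(1+\eta)]$ for $\eta=O(\sqrt{(\log n)/d})$. Choosing $d$ a sufficiently large multiple of $(\log n)/\delta^2$ ensures $\eta\le \delta/6$, and then the trivial ``star'' ultrametric (all $n$ points realized as leaves at common height $\sqrt{2}(1+\eta)/2$ below a single root) witnesses an ultrametric embedding of distortion at most $(1+\eta)/(1-\eta)\le 1+\delta/2$.

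For the soundness direction (input from \Dpl{}), let $\{a_i,a_j,\tilde a_k\}$ denote the three planted points. The same concentration facts, applied to $\tilde a_k=(1-\rho)a_k+\rho\cdot\tfrac{a_i+a_j}{2}$, yield with high probability
\begin{align*}
\|a_i-a_j\|_2^2 &\;\approx\; 2,\\
\|\tilde a_k-a_i\|_2^2 \;=\; \|\tilde a_k-a_j\|_2^2 &\;\approx\; 2-3\rho+\tfrac{3}{2}\rho^2.
\end{align*}
Invoking Lemma~\ref{lem:distortion} on this triple (with $a=\tilde a_k$, $b=a_i$, $a'=a_j$, so that $1+\gamma'\approx 1$ and $1+\gamma\approx \sqrt{2/(2-3\rho+3\rho^2/2)}$) shows that any ultrametric embedding of these three points alone, and hence of the entire $n$-point set, incurs distortion at least $\sqrt{2/(2-3\rho+3\rho^2/2)}$. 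For any fixed constant $\rho>0$ this is a constant strictly greater than $1$, and picking $\delta>0$ smaller than this gap closes the soundness direction.

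The main obstacle is simultaneously balancing three quantities: the concentration error $\eta$ (shrinking as $d$ grows), the planted-triple distortion gap $\sqrt{2/(2-3\rho+3\rho^2/2)}-1$ (a function of $\rho$ only), and the requirement $d=O_{\rho,\varepsilon}(\log n)$ for \CH. Once $\delta$ is picked smaller than the gap and $d$ a large enough constant multiple of $\log n$, the reduction itself runs in $O(nd)$ additional time, so an $n^{1+\varepsilon}$ algorithm for Euclidean \BUF distinguishing distortion $1+\delta/2$ from $1+\delta$ would immediately yield an $n^{1+\varepsilon}$ algorithm for \CP, contradicting \CH.
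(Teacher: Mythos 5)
Your proposal is correct and follows essentially the same route as the paper: reduce \CP to the gap version of \BUF, use concentration of pairwise distances of uniform points in the ball together with a star ultrametric for completeness, and apply Lemma~\ref{lem:distortion} to the planted triple $\{a_i,a_j,\tilde a_k\}$ for soundness. The only cosmetic difference is that you compute the concentration constant ($\beta=\sqrt{2}$, giving $\|\tilde a_k-a_i\|_2^2\approx 2-3\rho+\tfrac{3}{2}\rho^2$) explicitly, whereas the paper invokes Fact~\ref{fact} with a generic $\beta$ and writes the same bound as $\sqrt{((\beta+\delta)/2)^2+\tfrac{3}{4}((1-\rho)(\beta+\delta))^2}$.
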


We use the following fact about  random sampling from high-dimensional unit ball.

\begin{fact}[\cite{Vbook18}]\label{fact}
\begin{sloppypar}For every $\delta>0$ there exists $c\in\mathbb N$ such that the following holds. Let $(a_1,\ldots ,a_n)\sim $\Dun{}$(n,c\cdot \log n)$. Then with high probability we have that for all distinct $i,j$ in $[n]$, $$\|a_i-a_j\|_2\in (\beta-\delta,\beta+\delta),$$ for some universal scaling constant $\beta>1$. \end{sloppypar}
\end{fact}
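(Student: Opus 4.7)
The plan is to establish the fact by decoupling the fluctuation of $\|a_i-a_j\|_2^2 = \|a_i\|_2^2 + \|a_j\|_2^2 - 2\langle a_i,a_j\rangle$ into two independent sources of noise: (a) the concentration of each individual norm around $1$ and (b) the concentration of every pairwise inner product around $0$. A union bound over all $n$ points and all $\binom{n}{2}$ pairs then gives the conclusion, with the universal scaling constant being $\beta = \sqrt{2}$.

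\textbf{Step 1 (norm concentration on \B).} For $a \sim \mathcal{U}(\mathcal{B}_d)$, volume scaling gives $\Pr[\|a\|_2 \le r] = r^d$ for $r \in [0,1]$, so $\Pr[\|a\|_2 \le 1-t] = (1-t)^d \le e^{-td}$. Choosing $t = \delta/10$ and $d = c\log n$ with $c$ large enough as a function of $\delta$, each sampled point satisfies $\|a_i\|_2 \in [1-\delta/10,\,1]$ except with probability $n^{-10}$. A union bound over the $n$ points ensures this holds for all $i$ simultaneously with high probability.

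\textbf{Step 2 (inner product concentration on the sphere).} Conditioned on the realized norms, the normalized directions $u_i = a_i/\|a_i\|_2$ are independent and uniform on $S^{d-1}$. For any fixed unit vector $v$, the map $u \mapsto \langle u, v\rangle$ is $1$-Lipschitz on $S^{d-1}$ and has mean zero, so Levy's concentration of measure on the sphere (equivalently, the Gaussian realization $u = g/\|g\|$ together with standard Gaussian concentration) yields
\begin{equation*}
\Pr\!\left[\,|\langle u_i,u_j\rangle| > t\,\right] \;\le\; 2 \exp(-c' t^2 d)
\end{equation*}
for a universal constant $c' > 0$. Taking $t = \delta/10$ and $d = c\log n$ with $c$ sufficiently large, each pair satisfies $|\langle u_i,u_j\rangle| \le \delta/10$ except with probability $n^{-10}$; a union bound over all $\binom{n}{2}$ pairs preserves the statement with high probability.

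\textbf{Step 3 (combining).} On the intersection of the good events in Steps 1 and 2, we have $|\langle a_i,a_j\rangle| \le \|a_i\|_2 \|a_j\|_2 \cdot (\delta/10) \le \delta/10$, so
\begin{equation*}
\|a_i-a_j\|_2^2 \;=\; \|a_i\|_2^2 + \|a_j\|_2^2 - 2\langle a_i,a_j\rangle \;\in\; [\,2 - C\delta,\; 2 + C\delta\,]
\end{equation*}
for some absolute constant $C$. Taking square roots and noting that $x \mapsto \sqrt{x}$ is Lipschitz on a neighborhood of $2$, we get $\|a_i-a_j\|_2 \in (\sqrt{2} - \delta,\,\sqrt{2} + \delta)$ after absorbing constants into $\delta$ and enlarging $c$ accordingly. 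Setting $\beta = \sqrt{2} > 1$ yields the fact.

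\textbf{Main obstacle.} The only nontrivial ingredient is the sub-Gaussian tail for inner products between two independent uniform points on $S^{d-1}$ used in Step 2; this is a standard application of concentration of measure on the sphere (hence the citation to \cite{Vbook18}) and presents no genuine technical difficulty. Steps 1 and 3 are elementary volume computations and bookkeeping.
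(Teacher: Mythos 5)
Your proof is correct, and it is essentially the argument behind the cited result: the paper itself gives no proof of this fact, deferring entirely to \cite{Vbook18}, where the statement is derived exactly as you do --- concentration of each norm near $1$ for uniform points in the ball, near-orthogonality of independent directions via spherical (Levy/Gaussian) concentration, and a union bound, giving pairwise distances concentrated around $\beta=\sqrt{2}$. No gaps worth flagging beyond the bookkeeping you already note (rescaling $\delta$ and enlarging $c$, and assuming $\delta$ small without loss of generality for the Lipschitz step).
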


\begin{proof}[Proof of Theorem~\ref{thm:ellp}]
Let $\varepsilon,\rho$ be the constants from \CH. Let $\delta:=\rho/9$ and $c$ be an integer guaranteed from Fact~\ref{fact}.
Let $A$ be the input to \CP (where $A\subseteq $\B and $|A|=n$). We may assume that $d>c\log n$. We show that if all points in $A$ were picked  independently and uniformly at random from \B then there is an  embedding of $A$ into an ultrametric with distortion less than $1+2\delta$ and if otherwise $A$ was sampled from \Dpl{}$(n,d,\gamma)$ then any embedding of $A$ to an ultrametric incurs a distortion of $1+4\delta$. Once we show this, the proof of the theorem statement immediately follows.

Suppose that  $A$ was sampled from \Dun{}$(n,d)$. From Fact~\ref{fact} we have that for all distinct $a_i,a_j$ in $A$, $$\|a_i-a_j\|_2\in (\beta-\delta,\beta+\delta),$$ for some universal scaling constant $\beta>1$.  Then the  ultrametric embedding is simply given by identifying $A$ with the leaves of a star graph on $n+1$ nodes. The distortion in the embedding in such a case would be at most $\frac{\beta+\delta}{\beta-\delta}\le 1+\nicefrac{2\delta}{\beta}<1+2\delta$.

Next, suppose that $A$ was sampled from \Dpl{}$(n,d,\rho)$. Then there exists 3 points $a_i,a_j,\tilde a_k$ in $A$ such that the following distances hold: 
\begin{align*}
\|a_i-a_j\|_2&\ge \beta-\delta,\\
\|a_i-\tilde a_k\|_2,\|a_j-\tilde a_k\|_2
&\le \sqrt{(\nicefrac{(\beta+\delta)}{2})^2+\nicefrac{3}{4}((1-\rho)\cdot(\beta+\delta))^2}
\le \beta -\rho.
\end{align*}

 We call Lemma~4.3 with the point-set $\{a_i,a_j,\tilde a_k\}$. Thus we have that even just embedding $\{a_i,a_j,\tilde a_k\}$ into an ultrametric incurs distortion of $1+4\delta$.
\end{proof}

Note that we can replace \CH by a search variant and this would imply the lower bound to the search variant of the \BUF problem (unlike Theorem~4.1). 

\section{Experiments}\label{sec:expt}
We present some experiments performed on three standard datasets:
DIABETES (768 samples, 8 features), MICE (1080 samples, 77 features),
PENDIGITS (10992 samples, 16 features) and compare our C++
implementation of the algorithm described above to the classic linkage
algorithms (average, complete, single or ward) as implemented in
the Scikit-learn library (note that the Scikit-learn implementation is also in C++).
The measure we are interested in is the maximum
distortion $\max\limits_{(u,v) \in P} \frac{\Delta(u,v)}{\|u-v \|_2}$,
where $P$ is the dataset and $\Delta$ the ultrametric output by 
the algorithm. Note that average linkage, single and ward linkage can
underestimate distances, \ie $\frac{\Delta(u,v)}{\|u-v \|_2} < 1$ for
some points $u$ and $v$. In practice, the smallest ratio given by
average linkage lies often between $0.4$ and $0.5$ and between $0.8$
and $0.9$ for ward linkage. For single linkage, the maximum distortion
is always $1$ and hence the minimum distortion can be very small. For
a fair comparison, we normalize the ultrametrics by multiplying every
distances by the smallest value for which
$\frac{\Delta(u,v)}{\|u-v \|_2}$ becomes greater than or equal to $1$
for all pairs. Note that what matters most in hierarchical clustering
is the structure of the tree induced by the ultrametric and performing
this normalization (a uniform scaling) does not change this structure.

\textbf{ApproxULT} \vc{change name, if we decide to stick to
  ApproxULT, then use ApproxULT. In any case, we should not use
  ALGO1.}\gl{for now, I changed it for ApproxULT. If we change again,
  we will need to change the name in the plot as well. Is ApproxULT2 a
  good name for the second algo?}  stands for the C++ implementation
of our algorithm. To compute the $\gamma$-approximate Kruskal tree, we
implemented the idea from~\cite{har2013euclidean}, that uses the
locality-sensitive hash family of \cite{andoni2006near} and runs in
time $O(nd + n^{1+1/\gamma^2}\log^2 n)$. The parameter $\gamma$ is
related to choices in the design of the locality-sensitive hash
family. It is hard to give the precise $\gamma$ that we choose during
our experiments since it relies on theoretical and asymptotic
analysis. However, we choose parameters to have, in theory, a $\gamma$
around $2.5$. 
\vc{We should say which value of $\gamma$ was chosen for the
  exps.}\gl{ok, I said something regarding this, hope that is
  understandable.}  Observe that our algorithm is roughly cut into two
distinct parts: computing a $\gamma$-KT tree $T$, and using $T$ to
compute the approximate cut weights and the corresponding cartesian
tree. Each of these parts play a crucial role in the approximation
guarantees. To understand better how important it is to have a tree
$T$ close to an exact MST, we implemented a slight variant of
\textbf{ApproxULT}, namely \textbf{ApproxAccULT}, in which $T$ is
replaced by an exact MST. Finally, we also made an implementation of
the quadratic running time \textbf{Farach et al.}'s algorithm since it
finds an optimal ultrametric. The best known algorithm for computing
an exact MST of a set of high-dimensional set of points is
$\Theta(n^2)$ and so \textbf{ApproxAccULT} and \textbf{Farach et al.'s
  algorithm} did not exhibit a competitive running time and were not
included in Figure~\ref{fig:runningTime}.


Table~\ref{fig:distortions} shows the maximum distortions of the
different algorithms. \textbf{Farach et al.} stands for the baseline
since the algorithm outputs the best ultrametric. For the linkage
algorithms, the results are deterministic hence exact (up to rounding)
while the output of our algorithm is probabilistic (this probabilistic
behavior comes from the locality-sensitive hash families). We
performed 100 runs for each dataset.\vc{we should state what is the
  variance}\gl{don't have time to compute it precisely. We can do this
  later. I slightly changed the text to avoid mentioning variance.}
 We observe that \textbf{ApproxULT} performs better than Ward's
method while being not too far from the others. \vc{give approx
  factors}\gl{is that $\gamma$? We already say something above,
  wouldn't it be redundant?}  \textbf{ApproxAccULT} performs almost
better than all algorithms except single linkage, this emphasizes the
fact that finding efficiently accurate $\gamma$-KT is
important. Interestingly single linkage is in fact close to the optimal solution.

\vc{compare with what the theory predicts for the $\gamma$ we
  chose}\gl{I am not sure what you have in mind. The complexity
  analysis and approx factor are asymptotic with big O so I don't know
  how to compare?}

\begin{table}
\begin{center}
\begin{tabular}{|c|c|c|c|}
  \hline
     & DIABETES & MICE & PENDIGITS \\
  \hline
  Average  & 11.1 & 9.7 & 27.5 \\
  Complete  & 18.5 & 11.8 & 33.8 \\
  Single  & 6.0 & 4.9 & 14.0 \\
  Ward  &  61.0 &59.3 & 433.8\\
  ApproxULT  & 41.0  & 51.2 & 109.8 \\
  ApproxAccULT  & 9.6 & 9.4 & 37.2 \\
  Farach et al.  & 6.0 & 4.9 & 13.9 \\
  \hline
\end{tabular}
\caption{Max distortions}
\label{fig:distortions}
\end{center}
\end{table}

Figure~\ref{fig:runningTime} shows the average running time, rounded
to $10^{-2}$ seconds. We see
that for small datasets, \textbf{ApproxULT} is comparable to linkage
algorithms, while \textbf{ApproxULT} is much faster on a large dataset, as
the complexity analysis predicts (roughly $36$ times faster than the
slowest linkage algorithm and $10$ times faster than the fastest one).
\vc{compare with what the theory predicts for the $\gamma$ chosen}\gl{same as above, I don't see how to do this}
\begin{figure}[h]
\begin{center}  \includegraphics[scale=0.6]{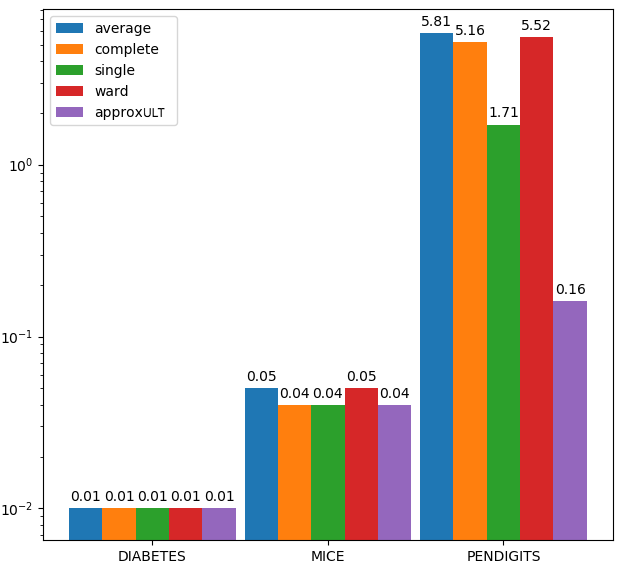}
\caption{Average running time, in seconds. Logarithmic scale.}
\label{fig:runningTime}
\end{center}
\end{figure}

\vc{THIS IS NOT A COMMENT -- but rather a todo. 
  Before the rebuttal:
  \begin{itemize}
  \item Fix the LSH scheme;
  \item Plots experiments with different values of $\gamma$ to
    see how this changes the running time
  \end{itemize}

  For later:
  \begin{itemize}
  \item Try with tree embedding ($\sqrt{\log n}$ distortion).
  \item Compute the best distortion using Farach et al.    
  \end{itemize}
}
\gl{Yes, good points, we should do this at some point!}

\section*{Acknowledgements}
We would like to thank all the reviewers for various comments that improved the presentation of this paper. We would also like to thank Ronen Eldan and Ori Sberlo for discussions on concentration of Gaussian. 

Karthik C.\ S.\ would like to thank the support of the  Israel Science Foundation (grant number 552/16) and the Len Blavatnik and the Blavatnik Family foundation. Guillaume Lagarde would like to thank the support of the DeepSynth CNRS Momentum project. Ce projet a b\'en\'efici\'e d'une aide de l'\'Etat g\'er\'ee
    par l'Agence Nationale de la Recherche au titre du Programme
    Appel à projets générique JCJC 2018 portant la r\'ef\'erence
    suivante : ANR-18-CE40-0004-01.

\bibliographystyle{alpha}
\bibliography{references}

\end{document}

\bibliographystyle{alpha}
\bibliography{references}

\newcommand{\etalchar}[1]{$^{#1}$}
\begin{thebibliography}{CAKMTM18}

\bibitem[ABF{\etalchar{+}}99]{ABFPT99}
Richa Agarwala, Vineet Bafna, Martin Farach, Mike Paterson, and Mikkel Thorup.
\newblock On the approximability of numerical taxonomy (fitting distances by
  tree metrics).
\newblock {\em {SIAM} J. Comput.}, 28(3):1073--1085, 1999.

\bibitem[AC11]{AC11}
Nir Ailon and Moses Charikar.
\newblock Fitting tree metrics: Hierarchical clustering and phylogeny.
\newblock {\em {SIAM} J. Comput.}, 40(5):1275--1291, 2011.

\bibitem[ACH19]{AbboudCH19}
Amir Abboud, Vincent Cohen{-}Addad, and Hussein Houdrouge.
\newblock Subquadratic high-dimensional hierarchical clustering.
\newblock In Hanna~M. Wallach, Hugo Larochelle, Alina Beygelzimer, Florence
  d'Alch{\'{e}}{-}Buc, Emily~B. Fox, and Roman Garnett, editors, {\em Advances
  in Neural Information Processing Systems 32: Annual Conference on Neural
  Information Processing Systems 2019, NeurIPS 2019, 8-14 December 2019,
  Vancouver, BC, Canada}, pages 11576--11586, 2019.

\bibitem[AI06]{andoni2006near}
Alexandr Andoni and Piotr Indyk.
\newblock Near-optimal hashing algorithms for approximate nearest neighbor in
  high dimensions.
\newblock In {\em 2006 47th annual IEEE symposium on foundations of computer
  science (FOCS'06)}, pages 459--468. IEEE, 2006.

\bibitem[Alm19]{A19}
Josh Alman.
\newblock An illuminating algorithm for the light bulb problem.
\newblock In {\em 2nd Symposium on Simplicity in Algorithms, SOSA@SODA 2019,
  January 8-9, 2019 - San Diego, CA, {USA}}, pages 2:1--2:11, 2019.

\bibitem[BBV08]{Balcan08}
Maria-Florina Balcan, Avrim Blum, and Santosh Vempala.
\newblock A discriminative framework for clustering via similarity functions.
\newblock In {\em Proceedings of the fortieth annual ACM symposium on Theory of
  computing}, pages 671--680. ACM, 2008.

\bibitem[CAKMT17]{CKM17}
Vincent Cohen-Addad, Varun Kanade, and Frederik Mallmann-Trenn.
\newblock Hierarchical clustering beyond the worst-case.
\newblock In {\em Advances in Neural Information Processing Systems}, pages
  6201--6209, 2017.

\bibitem[CAKMTM18]{CKMM18}
Vincent Cohen-Addad, Varun Kanade, Frederik Mallmann-Trenn, and Claire Mathieu.
\newblock Hierarchical clustering: Objective functions and algorithms.
\newblock In {\em Proceedings of the Twenty-Ninth Annual ACM-SIAM Symposium on
  Discrete Algorithms}, pages 378--397. SIAM, 2018.

\bibitem[CC17]{CC17}
Moses Charikar and Vaggos Chatziafratis.
\newblock Approximate hierarchical clustering via sparsest cut and spreading
  metrics.
\newblock In {\em Proceedings of the Twenty-Eighth Annual ACM-SIAM Symposium on
  Discrete Algorithms}, pages 841--854. Society for Industrial and Applied
  Mathematics, 2017.

\bibitem[CCG{\etalchar{+}}98]{charikar1998approximating}
Moses Charikar, Chandra Chekuri, Ashish Goel, Sudipto Guha, and Serge Plotkin.
\newblock Approximating a finite metric by a small number of tree metrics.
\newblock In {\em Proceedings 39th Annual Symposium on Foundations of Computer
  Science (Cat. No. 98CB36280)}, pages 379--388. IEEE, 1998.

\bibitem[CCN19]{CCN19}
Moses Charikar, Vaggos Chatziafratis, and Rad Niazadeh.
\newblock Hierarchical clustering better than average-linkage.
\newblock In {\em Proceedings of the Thirtieth Annual ACM-SIAM Symposium on
  Discrete Algorithms}, pages 2291--2304. SIAM, 2019.

\bibitem[CCNY18]{CCN18}
Moses Charikar, Vaggos Chatziafratis, Rad Niazadeh, and Grigory Yaroslavtsev.
\newblock Hierarchical clustering for euclidean data.
\newblock {\em arXiv preprint arXiv:1812.10582}, 2018.

\bibitem[CIP06]{CIP06}
Chris Calabro, Russell Impagliazzo, and Ramamohan Paturi.
\newblock A duality between clause width and clause density for {SAT}.
\newblock In {\em 21st Annual {IEEE} Conference on Computational Complexity
  {(CCC} 2006), 16-20 July 2006, Prague, Czech Republic}, pages 252--260, 2006.

\bibitem[CM10]{Carlsson10}
Gunnar Carlsson and Facundo M{\'e}moli.
\newblock Characterization, stability and convergence of hierarchical
  clustering methods.
\newblock {\em Journal of machine learning research}, 11(Apr):1425--1470, 2010.

\bibitem[CM15]{cochez2015twister}
Michael Cochez and Hao Mou.
\newblock Twister tries: Approximate hierarchical agglomerative clustering for
  average distance in linear time.
\newblock In {\em Proceedings of the 2015 ACM SIGMOD international conference
  on Management of data}, pages 505--517. ACM, 2015.

\bibitem[Das15]{dasgupta15}
Sanjoy Dasgupta.
\newblock A cost function for similarity-based hierarchical clustering.
\newblock {\em arXiv preprint arXiv:1510.05043}, 2015.

\bibitem[DKL19]{DKL19}
Roee David, {Karthik {C. S.}}, and Bundit Laekhanukit.
\newblock On the complexity of closest pair via polar--pair of point--sets.
\newblock {\em {SIAM} J. Discrete Math.}, 33(1):509--527, 2019.

\bibitem[FKW95]{FKW95}
Martin Farach, Sampath Kannan, and Tandy~J. Warnow.
\newblock A robust model for finding optimal evolutionary trees.
\newblock {\em Algorithmica}, 13(1/2):155--179, 1995.

\bibitem[FN18]{FiltserN18}
Arnold Filtser and Ofer Neiman.
\newblock Light spanners for high dimensional norms via stochastic
  decompositions.
\newblock In {\em 26th Annual European Symposium on Algorithms, {ESA} 2018,
  August 20-22, 2018, Helsinki, Finland}, pages 29:1--29:15, 2018.

\bibitem[GO95]{GO95}
Anka Gajentaan and Mark~H. Overmars.
\newblock On a class of o(n2) problems in computational geometry.
\newblock {\em Comput. Geom.}, 5:165--185, 1995.

\bibitem[GO12]{GO12}
Anka Gajentaan and Mark~H. Overmars.
\newblock On a class of o(n\({}^{\mbox{2}}\)) problems in computational
  geometry.
\newblock {\em Comput. Geom.}, 45(4):140--152, 2012.

\bibitem[HPIS13]{har2013euclidean}
Sariel Har-Peled, Piotr Indyk, and Anastasios Sidiropoulos.
\newblock Euclidean spanners in high dimensions.
\newblock In {\em Proceedings of the twenty-fourth annual ACM-SIAM symposium on
  Discrete algorithms}, pages 804--809. SIAM, 2013.

\bibitem[IP01]{IP01}
Russell Impagliazzo and Ramamohan Paturi.
\newblock On the complexity of k-sat.
\newblock {\em Journal of Computer and System Sciences}, 62(2):367--375, 2001.

\bibitem[IPZ01]{IPZ01}
Russell Impagliazzo, Ramamohan Paturi, and Francis Zane.
\newblock Which problems have strongly exponential complexity?
\newblock {\em Journal of Computer and System Sciences}, 63(4):512--530, 2001.

\bibitem[KKK16]{KKK16}
Matti Karppa, Petteri Kaski, and Jukka Kohonen.
\newblock A faster subquadratic algorithm for finding outlier correlations.
\newblock In {\em Proceedings of the Twenty-Seventh Annual {ACM-SIAM} Symposium
  on Discrete Algorithms, {SODA} 2016, Arlington, VA, USA, January 10-12,
  2016}, pages 1288--1305, 2016.

\bibitem[KM19]{KM19}
{Karthik {C. S.}} and Pasin Manurangsi.
\newblock On closest pair in euclidean metric: Monochromatic is as hard as
  bichromatic.
\newblock In {\em 10th Innovations in Theoretical Computer Science Conference,
  {ITCS} 2019, January 10-12, 2019, San Diego, California, {USA}}, pages
  17:1--17:16, 2019.

\bibitem[MW17]{MW17}
Benjamin Moseley and Joshua Wang.
\newblock Approximation bounds for hierarchical clustering: Average linkage,
  bisecting k-means, and local search.
\newblock In {\em Advances in Neural Information Processing Systems}, pages
  3094--3103, 2017.

\bibitem[PVG{\etalchar{+}}11]{scikit-learn}
F.~Pedregosa, G.~Varoquaux, A.~Gramfort, V.~Michel, B.~Thirion, O.~Grisel,
  M.~Blondel, P.~Prettenhofer, R.~Weiss, V.~Dubourg, J.~Vanderplas, A.~Passos,
  D.~Cournapeau, M.~Brucher, M.~Perrot, and E.~Duchesnay.
\newblock Scikit-learn: Machine learning in {P}ython.
\newblock {\em Journal of Machine Learning Research}, 12:2825--2830, 2011.

\bibitem[RP16]{RP16}
Aurko Roy and Sebastian Pokutta.
\newblock Hierarchical clustering via spreading metrics.
\newblock In {\em Advances in Neural Information Processing Systems}, pages
  2316--2324, 2016.

\bibitem[Rub18]{R18}
Aviad Rubinstein.
\newblock Hardness of approximate nearest neighbor search.
\newblock In {\em Proceedings of the 50th Annual {ACM} {SIGACT} Symposium on
  Theory of Computing, {STOC} 2018, Los Angeles, CA, USA, June 25-29, 2018},
  pages 1260--1268, 2018.

\bibitem[RW19]{RW19}
Aviad Rubinstein and Virginia~Vassilevska Williams.
\newblock {SETH} vs approximation.
\newblock {\em {SIGACT} News}, 50(4):57--76, 2019.

\bibitem[Val88]{V88}
Leslie~G. Valiant.
\newblock Functionality in neural nets.
\newblock In {\em Proceedings of the First Annual Workshop on Computational
  Learning Theory, {COLT} '88, Cambridge, MA, USA, August 3-5, 1988}, pages
  28--39, 1988.

\bibitem[Val15]{V15}
Gregory Valiant.
\newblock Finding correlations in subquadratic time, with applications to
  learning parities and the closest pair problem.
\newblock {\em J. {ACM}}, 62(2):13:1--13:45, 2015.

\bibitem[Ver18]{Vbook18}
Roman Vershynin.
\newblock {\em High-Dimensional Probability: An Introduction with Applications
  in Data Science}.
\newblock Cambridge Series in Statistical and Probabilistic Mathematics.
  Cambridge University Press, 2018.

\bibitem[Wil05]{W05}
Ryan Williams.
\newblock A new algorithm for optimal 2-constraint satisfaction and its
  implications.
\newblock {\em Theor. Comput. Sci.}, 348(2-3):357--365, 2005.

\bibitem[Wil15]{Williams15}
Virginia~Vassilevska Williams.
\newblock Hardness of easy problems: Basing hardness on popular conjectures
  such as the strong exponential time hypothesis (invited talk).
\newblock In {\em 10th International Symposium on Parameterized and Exact
  Computation, {IPEC} 2015, September 16-18, 2015, Patras, Greece}, pages
  17--29, 2015.

\bibitem[Wil16]{Williams16}
Virginia~Vassilevska Williams.
\newblock Fine-grained algorithms and complexity (invited talk).
\newblock In {\em 33rd Symposium on Theoretical Aspects of Computer Science,
  {STACS} 2016, February 17-20, 2016, Orl{\'{e}}ans, France}, pages 3:1--3:1,
  2016.

\bibitem[Wil18]{Vir18}
Virginia~Vassilevska Williams.
\newblock On some fine-grained questions in algorithms and complexity.
\newblock In {\em Proc. Int. Cong. of Math.}, volume~3, pages 3431--3472, 2018.

\end{thebibliography}

\end{document}